\newtheorem{theorem}{Theorem}
  \theoremstyle{example}
  \newtheorem{example}[theorem]{Example}
  \theoremstyle{definition}
  \newtheorem{definition}[theorem]{Definition}
  \theoremstyle{proposition}
  \newtheorem{proposition}[theorem]{Proposition}
  \theoremstyle{lemma}
  \newtheorem{lemma}[theorem]{Lemma}
\newenvironment{lyxlist}[1]
{\begin{list}{}
{\settowidth{\labelwidth}{#1}
 \setlength{\leftmargin}{\labelwidth}
 \addtolength{\leftmargin}{\labelsep}
 }}
{\end{list}}
\newcommand{\commentout}[1]{}
\newcommand{\Rat}{\mathbb{Q}}
\newcommand{\Ag}{{\text{\it Ag}}}
\newcommand{\Wei}{\mathbb{L}}
\tikzstyle{aworld}=[circle,draw=black!50,fill=black!20,thick]
\tikzstyle{world}=[circle,draw=black!50,thick]
\tikzstyle{wworld}=[circle,draw=black!55,thick,circle,draw=black!50,thick]
\tikzstyle{aaction}=[rectangle,draw=black!50,fill=black!20,thick]
\tikzstyle{action}=[rectangle,draw=black!50,thick]
\title{Conditional Belief, Knowledge and Probability}
\author{Jan van Eijck
  \institute{CWI and ILLC\\ Amsterdam, Netherlands}
  \email{jan.van.eijck@cwi.nl}
\and
Kai Li\thanks{Kai Li thanks the China Scholarship Council (CSC) for a visiting grant
to CWI, and CWI for hospitality during academic year 2016/2017. 
We wish to thank Dick de Jongh for useful advice, and Malvin Gattinger for his help and advice when we were organizing ideas of this paper. We are also grateful to 3 anonymous reviewers for their comments on an earlier version of the paper.}
  \institute{Peking University\\ Beijing, China}
 \institute{CWI\\ Amsterdam, Netherlands}
  \email{likaiedemon@gmail.com}
}
\begin{document}
\maketitle
\begin{abstract}
  A natural way to represent beliefs and the process of updating
  beliefs is presented by Bayesian probability theory, where belief of 
  an agent $a$ in $P$ can be interpreted as $a$ considering that $P$ is more
  probable than not $P$. This paper attempts to get at the core logical 
  notion underlying this. 

  The paper presents a sound and complete neighbourhood logic for
  conditional belief and knowledge, and traces the connections with
  probabilistic logics of belief and knowledge. The key notion in this
  paper is that of an agent $a$ believing $P$ conditionally on having
  information $Q$, where it is assumed that $Q$ is compatible with
  what $a$ knows.  

  Conditional neighbourhood logic can be viewed as a core system for
  reasoning about subjective plausibility that is not yet committed to
  an interpretation in terms of numerical probability. Indeed, every
  weighted Kripke model gives rise to a conditional neighbourhood
  model, but not vice versa.  We show that our calculus for
  conditional neighbourhood logic is sound but not complete for
  weighted Kripke models.  Next, we show how to extend the calculus
  to get completeness for the class of weighted Kripke models.

  Neighbourhood models for conditional belief are closed under model
  restriction (public announcement update), while earlier
  neighbourhood models for belief as `willingness to bet' were
  not. Therefore the logic we present improves on earlier
  neighbourhood logics for belief and knowledge. We present complete 
  calculi for public announcement and for publicly revealing the 
  truth value of propositions using reduction axioms. The reductions show 
  that adding these announcement operators to the language does not increase 
  expressive power. 
\end{abstract}


\section{Introduction}\label{section:int}

This paper aims at isolating a core logic of rational belief and
belief update that is compatible with the Bayesian picture of rational
inference \cite{Jaynes:pttlos}, but that is more general, in the
sense that it does not force epistemic weight models (that is, models
with fixed subjective probability measures, for each agent) on us.

Epistemic neighbourhood models, as defined in \cite{EijckRenne2016:upkb},
represent belief as truth in a neighbourhood, where the neighbourhoods
for an agent are subsets of the current knowledge cell of that agent.
Intuitively, a neighbourhood lists those propositions compatible with
what the agent knows that the agent considers as more likely than
their complements. Since we intend to use neighbourhood semantics to
model a certain kind of belief, it is natural to study belief
updates. However, there is an annoying obstacle for updates even in a
very simple case: public announcement.

Intuitively, a public announcement would make an agent restrict
his/her belief to the announced case. A natural way to implement this
is by restricting every belief-neighbourhood to $\phi$-worlds after
announcing $\phi$. The following example shows that this does not
work, because this kind of update does not preserve reasonable
neighbourhood conditions.  Suppose Alice, somewhat irrationally,
believes that ticket $t$ she has bought is the winning ticket in a
lottery that she knows has 10,000 tickets.  Let $n$ represent the
world where ticket $n$ is winning (we assume that this is a single
winner lottery, and that Alice knows this).  Then Alice's belief is
given by a neighbourhood model with
\[
  N_a (n) = \{ X \subseteq \{0000,\ldots,9999\} \mid t \in X \},
\]
for all $n \in \{ 0000, \ldots, 9999 \}$.  Note that Alice's belief
does not depend on the world she is in: if $n,m$ are different worlds,
then $N_a (n) = N_a (m)$.

Assume Alice gets the information that some ticket $v$ different from
the ticket $t$ that she bought is the winning ticket. Let $p$ be such
that $V(p) = \{ v \}$.  Then updating with $p$ leads to an updated
model with world set $\{v\}$, and with
$N'_{a}(v)=\{X\cap\{v\}\mid X\in N_{a}(v)\}=\{\emptyset,\{v\}\}$.
However, $N'_{a}(v)$ is not a neighbourhood function, for it
contradicts the condition that beliefs should be consistent (different
from $\emptyset$).

The rest of the paper is structured as follows. Section
\ref{section:cns} introduces conditional neighbourhood semantics as an
enrichment of epistemic neighbourhood semantics, and presents a
complete calculus for it.  In Section \ref{section:wei} we show that
this calculus is sound but not complete for epistemic weight models,
and next, that our language is expressive enough to allow an extension
to a complete system for weight models. Section \ref{section:pub}
shows that conditional neighborhood models are an excellent starting
point for an extension with public announcement update
operators. Section \ref{section:cafw} traces connections with the
literature, lists further work, and concludes.

\section{Conditional Neighbourhood Semantics} \label{section:cns}

Epistemic neighbourhood models are defined in
\cite{EijckRenne2016:upkb} as epistemic models
$\mathfrak{M}=(W,\sim,V)$ with a neighbourhood function $N$ added to
them. The neighbourhood function assigns to each agent $a$ and each
world $w\in W$ a neighbourhood $N_a(w)$ that consists of the set of
propositions that agent $a$ believes in $w$.

Intuitively each element in neighbourhood $N_a(w)$ represents a belief
agent $a$ holds. Usually belief is bolder than knowledge. Indeed, most
people believe many things of which they are not sure. If we equate
certainty with knowledge, then this means that any belief of agent $a$
should be a subset of agent $a$'s current \emph{knowledge cell}, i.e.,
the set $[w]_a=\{u\in W\mid w\sim_au\}$. It follows that each
proposition in neighbourhood $N_a(w)$ is a subset of $[w]_a$.  Thus it
is natural (in the framework of epistemic modal logic) to assume the
following neighbourhood conditions:
\begin{description}
\item [Monotonicity] If an agent believes $X$, and knows that $X$
  entails $Y$, then the agent believes $Y$.
\item [No-inconsistency] An agent does not hold an inconsistent belief.
\item [Determinacy] An agent do not believe both a proposition and its
  complement.
\end{description}

However as is illustrated by Alice's Lottery example from the introductory
section, public announcement update do not preserve the 
\textbf{No-inconsistency} condition.  In order to overcome this
problem, we propose to enrich neighbourhood functions $N$ with an
extra parameter for propositions. In other words, instead of focusing on what
agents believe, we turn our attention to what agents would believe
under some assumption. Following this intuition, for each proposition
$X$, $N_{a}^{w}(X)$ is a set of propositions such that each of these
propositions represents a belief agent $a$ holds at state $w$ when
supposing $X$. In this paper, we are interested in beliefs as
`willingness to bet', i.e., an agent believes a proposition $Y$
supposing $X$ if the agent considers $Y\cap X$ more likely to be
true than its complement conditioned by $X$, namely $-Y\cap X$.  We
also assume the following postulate:
\begin{description}
\item [Equivalence of Conditions] If an agent knows that two conditions are
equivalent, then the agent's beliefs are the same under both conditions.
\end{description}
For non-equivalent conditions, on the other hand, conditional beliefs
may vary.

Assume $p$ ranges over a set of proposition letters $P$, and $a$ over
a finite set of agents $A$.  The language for conditional
neighbourhood logic ${\mathcal L}_{CN}$ is given by the following BNF
definition:
\[
\phi::=\top\mid p\mid\neg\phi\mid\left(\phi\wedge\phi\right)\mid B_{a}(\phi,\phi)
\]
$B_{a}(\phi,\psi)$ can be read as ``assuming $\phi$, agent $a$
believes (is willing to bet) $\psi$''. $\perp,\vee,\to,\leftrightarrow$ are defined as
usual. Somewhat arbitrarily, we assume that conditioning with
information that contradicts what the agent knows (is certain of) will
cause an agent to believe nothing anymore. This means we can define
knowledge in terms of conditional belief, as follows. Use $K_{a}\phi$
for $\neg B_{a}(\neg\phi,\top)$ (which can be read as ``$\neg\phi$
contradicts what agent $a$ is certain of'',) and
$\check{K}_{a}\phi$ for $\neg K_{a}\neg\phi$.

Consider Alice's lottery situation again.  Alice knows there are
10,000 lottery tickets numbered 0000 through 9999.  Alice believes
ticket $t$ is winning (and buys it).  Let $n$ represent the world
where ticket $n$ is winning. Then Alice's belief is given by a
conditional neighbourhood model with
$N^w_a (X) = \{ Y \subseteq X
\mid t \in Y \}$ if
$t \in X$, and
$N^w_a (X) = \{ Y \subseteq X \mid |Y| > \frac 1 2 | X | \}$ if
$t \notin X$.  Now Alice receives the information that $v \neq t$ is
the winning ticket.  Then $v = w$, the updated model has universe
$\{ v \}$, and Alice updates her belief to $N'$ with
${N'}^v_a (\{ v \}) = \{ \{ v \} \}$. In the updated model, Alice
knows that $v$ is the winning ticket.

\begin{definition}
\label{def:NeighborModel} Let $\Ag$ be a finite set of agents. A
conditional neighbourhood model ${\mathfrak M}$ is a tuple $\left(W,N,V\right)$
where:
\begin{itemize}
\item $W$ is a non-empty set of worlds;
\item $N:\Ag\times W\times\mathcal{P}W\to\mathcal{P}\mathcal{P}W$ is a function that assigns to every agent $a\in\Ag$, every world $w\in W$ and
  set of worlds $X\subseteq W$ a collection $N_{a}^{w}(X)$ of sets of
  worlds\textendash each such set called a neighbourhood of
  $X$\textendash subject to the following conditions, where
\[
   [w]_{a}=\{v\in W\mid\forall X\subseteq W,N_{a}^{w}(X)=N_{a}^{v}(X)\}: 
\]
\begin{description}
\item [{(c)}] $\forall Y\in N_{a}^{w}(X):Y\subseteq X\cap[w]_{a}$.
\item [{(ec)}] $\forall Y\subseteq W$: if $X\cap[w]_{a}=Y\cap[w]_{a}$,
then $N_{a}^{w}(X)=N_{a}^{w}(Y)$.
\item [{(d)}] $\forall Y\in N_{a}^{w}(X),X\cap[w]_{a}-Y\notin N_{a}^{w}(X)$.
\item [{(sc)}] $\forall Y,Z\subseteq X\cap[w]_{a}:$ if $X\cap[w]_{a}-Y\notin N_{a}^{w}(X)$
and $Y\subsetneq Z$, then $Z\in N_{a}^{w}(X)$.
\end{description}
\item $V$ is a valuation.
\end{itemize}
\end{definition}
We call $N$ a neighbourhood function; a neighbourhood $N_{a}^{w}(X)$ for agent
$a$ in $w$, conditioned by $X$ is a set of propositions each of which
agent $a$ believes more likely to be true than its complement.

Property (c) expresses that what is believed is also known; (ec) expresses
\textbf{equivalence of conditions}; (d) expresses \textbf{determinacy}; 
(sc) expresses a form of ``strong commitment'': if the agent does not 
believe the complement of $Y$ then she must believe any weaker $Z$ implied
by $Y$. It can be proved (see Appendix \ref{appendix:alternativeDef}, Lemma \ref{lem:implied-conditions}) that
these conditions together imply that any conditional neighbourhood model
 ${\mathfrak M} = (W,N,V)$ also satisfies the following, for any $a \in A$, 
$w \in W$, $X \subseteq W$: 
\begin{description}
\item [{(m)}] $\forall Y\subseteq Z\subseteq X\cap[w]_{a}:$ if $Y\in N_{a}^{w}(X)$,
then $Z\in N_{a}^{w}(X)$;
\item [{(ni)}] $\emptyset\notin N_{a}^{w}(X)$;
\item [{(n){*}}] if $X\cap[w]_{a}\neq\emptyset$, then $X\cap[w]_{a}\in N_{a}^{w}(X)$;
\item [{($\emptyset$)}] if $X\cap[w]_{a}=\emptyset$, then
  $N_{a}^{w}(X)=\emptyset$;
\end{description}
where (m) and (ni) expresses \textbf{monotonicity} and \textbf{no-inconsistency} respectively. ($\emptyset$) expresses that conditioning with
information that contradicts what the agent knows will
cause an agent to believe nothing anymore. Note that ($\emptyset$) reflects our definition for $K$-operators $K_a\phi::=\neg B_a(\neg\phi,\top)$.

Let ${\mathfrak M}=\left(W,N,V\right)$ be a conditional neighbourhood
model, let $w\in W$. Then the key clause of the truth definition is 
given by: 
\[
\begin{array}{lcl}
{\mathfrak M},w\vDash B_{a}(\phi,\psi) & \text{iff} & \text{for some }Y\in N_{a}^{w}(\left\llbracket \phi\right\rrbracket _{{\mathfrak M}})\text{, }Y\subseteq\left\llbracket \psi\right\rrbracket _{{\mathfrak M}}
\end{array}
\]
where $\left\llbracket \phi\right\rrbracket _{{\mathfrak M}}=\{w\in W\mid{\mathfrak M},w\vDash\phi\}$.
Because of (m), we can prove that
\[
{\mathfrak M},w\vDash B_{a}(\phi,\psi)\text{ iff }\{v\in\left\llbracket \phi\right\rrbracket _{{\mathfrak M}}\cap[w]_{a}\mid{\mathfrak M},v\vDash\psi\}\in N_{a}^{w}(\left\llbracket \phi\right\rrbracket _{{\mathfrak M}}).
\]
It is worth noting that by (ni), $B_a(\phi,\bot)$ will always be invalid for any agent $a$ and any formula $\phi$.

Note that conditional neighborhood models do not have epistemic
relations $\sim_a$. However, such relations can be introduced by the
neighourhood function as follows: for each $a\in Ag$,
$\sim_{a}\ \subseteq W\times W$ is a relation such that
\[
\forall w,v\in W,w\sim_{a}v \text{ iff }\forall X\subseteq W,\
  N_{a}^{w}(X)=N_{a}^{v}(X).
\]
Then
${\mathfrak M},w\vDash K_{a}\phi\text{ iff for each }v\sim_{a}w\text{,
}{\mathfrak M},v\vDash\phi$. 

It can be proved that the version of conditional neighbourhood models
with epistemic relations $\sim_a$ is equivalent to the version without
(see Appendix \ref{appendix:alternativeDef}). Such equivalence is guaranteed by properties (n)*, ($\emptyset$) and another property which was can be found in \cite{DBLP:conf/atal/BalbianiDL16}(, however they use neighbourhoods instead of conditional neighbourhoods for beliefs):
\begin{lyxlist}{00.00.0000}
\item [{(a)}] $\forall v\in\left[w\right]_{a}:N_{a}^{w}(X)=N_{a}^{v}(X)$,
\end{lyxlist}
which states that if a agent cannot distinguish two worlds, then the agent holds the same beliefs on either of these two worlds. 
Thus we do
not differentiate conditional neighbourhood models with or without
such relations.

\begin{figure}[htbp]
\begin{lyxlist}{00.00.0000}
\item [{(Taut)}] \begin{flushleft}
All instances of propositional tautologies
\par\end{flushleft}
\item [{(Dist-K)}] \begin{flushleft}
$K_{a}(\phi\to\psi)\to K_{a}\phi\to K_{a}\psi$
\par\end{flushleft}
\item [{(T)}] \begin{flushleft}
$K_{a}\phi\to\phi$
\par\end{flushleft}
\item [{(5B)}] \begin{flushleft}
$B_{a}(\phi,\psi)\to K_{a}B_{a}(\phi,\psi)$
\par\end{flushleft}
\item [{(4B)}] \begin{flushleft}
$\neg B_{a}(\phi,\psi)\to K_{a}\neg B_{a}(\phi,\psi)$
\par\end{flushleft}
\item [{(D)}] \begin{flushleft}
$B_a(\phi, \psi) \rightarrow \neg B_a (\phi, \neg \psi)$
\par\end{flushleft}
\item [{(EC)}] \begin{flushleft}
$K_{a}(\phi\leftrightarrow\psi)\to B_{a}(\phi,\chi)\to B_{a}(\psi,\chi)$ 
\par\end{flushleft}
\item [{(M)}] \begin{flushleft}
$K_{a}(\phi\to\psi)\to B_{a}(\chi,\phi)\to B_{a}(\chi,\psi)$ 
\par\end{flushleft}
\item [{(C)}] $B_{a}(\phi,\psi)\to B_{a}(\phi,\phi\wedge\psi)$
\item [{(SC)}] 	$\check{B}_{a}\phi\land\check{K}_{a}(\lnot\phi\land\psi)\to B_{a}(\phi\lor\psi)$
\end{lyxlist}
Rules: 
\[
\begin{array}{c}
\phi\to\psi\quad\phi\\
\hline \psi
\end{array}\;\text{\footnotesize(MP)}\qquad\begin{array}{c}
\phi\\
\hline K_{a}\phi
\end{array}\;\text{\footnotesize(Nec-K)}
\]
\caption{The CN Calculus for Conditional Neighbourhood Logic} \label{figure:CNcalculus}
\end{figure}

In Figure \ref{figure:CNcalculus}, axiom (D) guarantees the truth of neighbourhood condition (d), (EC) would correspond to (ec), (M) to (ec), (M) to (m), (C) to (c) and (SC) to (sc).

\begin{theorem} \label{thm:CompCN}
The CN calculus for Conditional
Neighbourhood logic given in Figure \ref{figure:CNcalculus} is sound
and complete for conditional neighbourhood models. 
\end{theorem}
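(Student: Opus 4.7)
The plan is a two-part proof: soundness by direct verification of each axiom, and completeness via a Henkin-style canonical model, with the main work going into defining the canonical neighbourhood function on subsets of worlds that are not named by any formula.

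For soundness, I would check each axiom in Figure~\ref{figure:CNcalculus} individually. Axioms (Dist-K), (T), (5B), (4B) follow from the fact that the relation $w \sim_a v$ induced by $N_a^w = N_a^v$ is an equivalence relation, together with property ($\emptyset$) from Lemma~\ref{lem:implied-conditions}, which makes $K_a \phi := \neg B_a(\neg \phi, \top)$ behave as S5 knowledge; axioms (D), (EC), (M), (C) and (SC) correspond directly to the semantic conditions (d), (ec), (m), (c), (sc), so their soundness is read off immediately from Definition~\ref{def:NeighborModel}. The rules (MP) and (Nec-K) preserve validity by the usual arguments.

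For completeness I would build a canonical model $\mathfrak{M}^c = (W^c, N^c, V^c)$ with $W^c$ the set of maximal CN-consistent sets and $V^c(p) = \{w \in W^c : p \in w\}$. Set $w \sim_a^c v$ iff $\{\phi : K_a \phi \in w\} \subseteq v$; by (5B) and (4B) this coincides with agreement on all $B_a$-formulas. On subsets $X \subseteq W^c$ whose intersection with $[w]_a^c$ is definable, say $X \cap [w]_a^c = \llbracket \phi \rrbracket \cap [w]_a^c$, define
\[
N_a^{c,w}(X) = \{ Y \subseteq X \cap [w]_a^c : \exists \chi,\ B_a(\phi,\chi) \in w \text{ and } \llbracket \phi \land \chi \rrbracket \cap [w]_a^c \subseteq Y \},
\]
which by (EC) is independent of the chosen $\phi$. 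The truth lemma for $B_a(\phi,\psi)$ then runs as follows: for the right-to-left direction take $\chi := \psi$; for the left-to-right direction use (C) to replace any witness $\chi$ by $\phi \land \chi$, translate the semantic inclusion into $K_a((\phi \land \chi) \to \psi) \in w$ via the induction hypothesis, and close using (M) and (Dist-K). Propositional and $K_a$ cases are standard, the latter using the S5 Henkin machinery for $\sim_a^c$.

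The principal obstacle is extending $N^c$ to all $X \subseteq W^c$ while preserving conditions (c), (ec), (d), (sc) and not disturbing the truth lemma. A natural strategy is to force $N_a^{c,w}(X)$ to depend only on $X \cap [w]_a^c$ (which yields (ec) for free) and, on intersections that are not formula-definable, to pick a canonical default satisfying (d) and (sc) by fiat, for instance by fixing a well-ordering and declaring a large class of ``majority'' subsets to be in $N_a^{c,w}(X)$. Since (SC) is stated schematically for formulas it only gives (sc) at definable intersections, so the non-definable case really does have to be handled separately; this is the step I expect to require the most care. An alternative route, which I would also consider, is to work with a filtrated finite canonical model through the subformulas of the target formula, in which every subset of $[w]_a^c$ is automatically definable and (SC) then delivers (sc) for all relevant sets at once.
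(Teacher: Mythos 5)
Your soundness argument and the general canonical-model strategy are fine, and your ``alternative route'' at the end -- a finite canonical model built through a subformula closure so that every subset of a knowledge cell is definable -- is in fact exactly what the paper does: it takes maximal consistent subsets of a finite closure $\Phi^{B}(\alpha)$ (each duplicated $\Omega$ times so that the equivalence classes can be set up with the right identification and existence properties), defines $N_{a}^{\mathbf{w}}(X)$ via characteristic formulas $\left\Vert X\right\Vert_{a}^{\mathbf{w}}$ of subsets of the (finite) cell, and then reads off (c), (a), (d), (sc), (ec) from the axioms (C), (D), (SC), (EC), (M) together with a lemma converting semantic inclusions inside a cell into $K_a$-formulas of the world. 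So the idea you relegate to a fallback is the load-bearing one.

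Your primary route, by contrast, has a genuine gap that is worse than you diagnose. You locate the difficulty in conditions $X$ whose intersection with $[w]^{c}_{a}$ is not definable, but conditions (d) and (sc) of Definition \ref{def:NeighborModel} quantify over \emph{arbitrary} subsets $Y,Z\subseteq X\cap[w]_{a}$, not just definable ones. With your witness-based definition of $N_{a}^{c,w}(X)$ on a definable condition, take an infinite cell and a ``generic'' $Y\subseteq X\cap[w]^{c}_{a}$ such that neither $Y$ nor its complement contains any set of the form $\llbracket\phi\wedge\chi\rrbracket\cap[w]^{c}_{a}$ with $B_{a}(\phi,\chi)\in w$: then $X\cap[w]^{c}_{a}-Y\notin N_{a}^{c,w}(X)$, yet a one-point enlargement $Z\supsetneq Y$ need not acquire a witness either, so (sc) fails. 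Patching $N$ to include such $Z$ by fiat (your ``majority'' default, e.g.\ an ultrafilter) is not available here, because on a definable condition the default must cohere with the witness-generated sets on pain of breaking (d) or the left-to-right direction of the truth lemma. This is precisely the obstruction that forces the finitary construction; since (SC) is a formula schema, it can only ever certify (sc) at definable neighbourhoods, and the clean way to make \emph{every} relevant neighbourhood definable is to make the cells finite. I would drop the infinite canonical model entirely and develop the filtration-style construction, taking care (as the paper does with the $\Omega$-fold duplication) that the induced relation ``$N_{a}^{w}=N_{a}^{v}$ for all arguments'' really coincides with the intended epistemic relation.
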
 
\begin{proof}
See Appendix \ref{appendix:CompCNproof}.
\end{proof}
Note that the
calculus does not have $4$ and $5$ for $K$; this is because these
principles are derivable from (5B) and (4B).

For an example of an interesting principle that can already be proved
in the CN calculus, consider the following. Suppose we have a biased
coin with unknown bias, and we want to use it to simulate a fair coin.
Then we can use a recipe first proposed by John von Neumann
\cite{Neumann1951:vt}: toss the coin twice. If the two outcomes are
not the same, use the first result; if not, forget the outcomes and
repeat the procedure. Why does this work? Because we can assume that
two tosses of the same coin have the same likelihood of showing heads,
even if the coin is biased. We can express subjective likelihood
comparison in our language.  See Figure \ref{fig:Comparison}, where we
use $\alpha$ for ``the first toss comes up with heads'', and $\beta$
for ``the second toss comes up with heads''. This hinges on the
following principle:
\begin{quote}
  If $\alpha$ and $\beta$ have the same likelihood, then 
   $\alpha \land \neg \beta$ and $\neg \alpha \land \beta$ 
  should also have the same likelihood, and vice versa. 
\end{quote}
Notice that $B_{a}(\alpha\leftrightarrow\neg\beta,\alpha)$ expresses
that agent $a$ considers $\alpha \land \neg \beta$ more likely than
$\neg \alpha \land \beta$. And it follows from the comparison principle in Figure \ref{fig:Comparison} that this 
is equivalent to: $a$ considers $\alpha$ more likely than $\beta$. From 
now on, let $\alpha \succ_{a}\beta$ abbreviate  $B_{a}(\alpha\leftrightarrow\neg\beta,\alpha)$. 

By axioms $M$ and $C$, $B_a( \alpha \leftrightarrow \neg \beta, \beta) 
\leftrightarrow B_a( \alpha \leftrightarrow \neg \beta, \neg \alpha)$, and 
by axiom $D$, $B_a( \alpha \leftrightarrow \neg \beta, \alpha) \rightarrow \neg B_a( \alpha \leftrightarrow \neg \beta, \neg \alpha)$. Therefore,  
$B_a( \alpha \leftrightarrow \neg \beta, \alpha) \rightarrow \neg B_a( \alpha \leftrightarrow \neg \beta, \beta)$ is provable in the CN calculus. Using the abbreviation: $\alpha \succ_{a}\beta \rightarrow 
\neg \beta \succ_{a} \alpha$. We therefore have three mutually exclusive cases: 
\begin{itemize} 
 \item $\alpha \succ_{a}\beta$. 
 \item $\beta \succ_{a} \alpha$. 
 \item $\neg \alpha \succ_{a}\beta \land \neg \beta \succ_{a} \alpha$.
\end{itemize}
Agreeing to abbreviate the third case as $\alpha\approx_{a}\beta$, we get the following 
totality principle.
\begin{description}
\item[Totality] $(\alpha\succ_{a}\beta)\vee(\beta\succ_{a}\alpha)\vee(\alpha\approx_{a}\beta).$
\end{description}
Next, we define $\alpha\succcurlyeq_a\beta$ by $\neg\beta\succ_a\alpha$. This abbreviation gives: 
\begin{description}
\item[Refl Totality] $(\alpha\succcurlyeq_{a}\beta)\vee(\beta\succcurlyeq_{a}\alpha).$
\end{description}
We can also connect to logic languages concerning probability that do
not have knowledge operators $K_a$ but instead use
${\succcurlyeq_a}\top$ (for instance \cite{Gardenfors75} and
\cite{ghosh2012comparing}), by deriving that
$(\alpha\succcurlyeq_a\top)\leftrightarrow K_a\alpha$.

\newcommand{\boundellipse}[3]
{(#1) ellipse (#2 and #3)}

\begin{figure}
\begin{center}
\begin{tikzpicture}[scale=0.7]
   \draw \boundellipse{0,0}{3}{2};
   \draw \boundellipse{3,0}{3}{2};
   \node (a) at (0,-2.3) {$\alpha$}; 
   \node (b) at (3,-2.3) {$\beta$};
   \node (ab) at (1.5,0) {$\alpha \land \beta$}; 
   \node (a-b) at (-1.5,0) {$\alpha \land \neg\beta$}; 
   \node (b-a) at (4.5,0) {$\beta \land \neg\alpha$}; 
\end{tikzpicture}
\end{center}
\caption{\textbf{Comparison principle}: 
$\alpha \land \neg\beta \succ \beta \land \neg\alpha\ \ $ iff $\ \ \alpha \succ \beta$ 
\label{fig:Comparison}}
\end{figure}
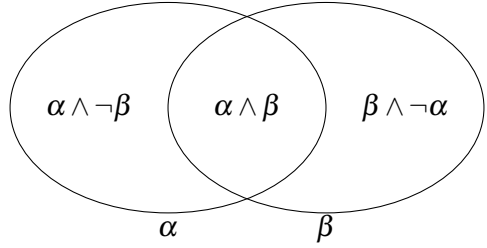

\section{Incompleteness and Completeness for Epistemic Weight Models}\label{section:wei}

In this Section we interpret ${\mathcal L}_{CN}$ in epistemic weight
models, and give an incompleteness and a completeness result.
\begin{definition}
An \emph{Epistemic Weight Model} for agents $\Ag$ and basic propositions
$P$ is a tuple $\mathfrak{M}=(W,R,L,V)$ where 
$W$ is a non-empty countable set of worlds, $R$ assigns to every
agent $a\in\Ag$ an equivalence relation $\sim_{a}$ on $W$, $L$
assigns to every $a\in\Ag$ a function $\Wei_{a}$ from $W$ to $\Rat^{+}$
(the positive rationals), subject to the following boundedness condition
({*}). 
\[
\label{*}\forall a\in\Ag\ \forall w\in W\sum_{u\in[w]_{a}}\Wei_{a}(u)<\infty.
\]
where $[w]_{a}$ is the cell of $w$ in the partition induced by $\sim_{a}$.
$V$ assigns to every $w\in W$ a subset of $P$.
\end{definition} 

We can interpret conditional belief sentences in these models.  Let
$\mathfrak{M}$ be a weight model, and let
$\left\llbracket \phi\right\rrbracket
_{\text{\ensuremath{\mathfrak{M}}}}=\{w\in
W\mid\text{\ensuremath{\mathfrak{M}}},w\vDash\phi\}$.
Let $w$ be a world of $\mathfrak{M}$. Then 
\[
\mathfrak{M},w\vDash B_a(\phi,\psi)  \text{ iff } 
\mathbb{L}_a([w]_{a}\cap\left\llbracket \phi\wedge\psi\right\rrbracket _{\text{\ensuremath{\mathfrak{M}}}})>\mathbb{L}_a([w]_{a}\cap\left\llbracket \phi\wedge\neg\psi\right\rrbracket _{\text{\ensuremath{\mathfrak{M}}}}),
\]
and $\mathfrak{M},w\vDash K_a \phi  \text{ iff  
for all } v \in [w]_a,\ \mathfrak{M},w\vDash \phi$. 
One easily checks that the axioms of the CN calculus 
are true for this interpretation, so we have: 
\begin{theorem}
The CN calculus is sound for epistemic weight models. 
\end{theorem}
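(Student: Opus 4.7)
The plan is to verify soundness by induction on derivations: each axiom in Figure \ref{figure:CNcalculus} is shown to be true at every world of every epistemic weight model $\mathfrak{M}=(W,R,L,V)$, and (MP) and (Nec-K) are shown to preserve validity. Propositional tautologies and the $K$-axioms (Dist-K) and (T), together with (MP) and (Nec-K), are routine, since $K_a$ is interpreted via the equivalence relation $\sim_a$ and hence satisfies $S5$.

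For the introspection axioms (5B) and (4B) I would exploit that the defining inequality for $B_a(\phi,\psi)$ at $w$ depends only on $[w]_a$, on $\mathbb{L}_a$, and on the extensions $\llbracket\phi\rrbracket_\mathfrak{M}$ and $\llbracket\psi\rrbracket_\mathfrak{M}$. Since $[v]_a=[w]_a$ whenever $v\sim_a w$, the truth value of $B_a(\phi,\psi)$ is constant on the cell, giving both introspection principles. The remaining belief axioms are direct manipulations of the defining strict inequality. (D) is immediate, since a strict inequality excludes its reverse. (EC) holds because $K_a(\phi\leftrightarrow\psi)$ forces $\llbracket\phi\rrbracket_\mathfrak{M}\cap[w]_a=\llbracket\psi\rrbracket_\mathfrak{M}\cap[w]_a$, so both defining sums coincide. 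For (M), $K_a(\phi\to\psi)$ yields $\llbracket\chi\wedge\phi\rrbracket_\mathfrak{M}\cap[w]_a\subseteq\llbracket\chi\wedge\psi\rrbracket_\mathfrak{M}\cap[w]_a$ together with the reverse inclusion for the complements; monotonicity of $\mathbb{L}_a$ (all weights are in $\Rat^+$) then propagates the strict inequality. (C) reduces to a propositional rewriting, since $\phi\wedge(\phi\wedge\psi)$ is equivalent to $\phi\wedge\psi$ and $\phi\wedge\neg(\phi\wedge\psi)$ is equivalent to $\phi\wedge\neg\psi$.

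The main obstacle is (SC), read with unary $B_a\phi$ abbreviating $B_a(\top,\phi)$. I would write $A=\llbracket\phi\rrbracket_\mathfrak{M}\cap[w]_a$, $B=\llbracket\neg\phi\wedge\psi\rrbracket_\mathfrak{M}\cap[w]_a$, and $C=\llbracket\neg\phi\wedge\neg\psi\rrbracket_\mathfrak{M}\cap[w]_a$, so that the cell $[w]_a$ decomposes as the disjoint union $A\cup B\cup C$. The premise $\check{B}_a\phi$ unpacks to $\mathbb{L}_a(A)\geq\mathbb{L}_a(B)+\mathbb{L}_a(C)$, while $\check{K}_a(\neg\phi\wedge\psi)$ produces a witness world in $B$, so $\mathbb{L}_a(B)>0$ because weights are strictly positive rationals and the sums within a cell are finite by the boundedness condition (*). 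Combining these, $\mathbb{L}_a(A\cup B)-\mathbb{L}_a(C)=\mathbb{L}_a(A)+\mathbb{L}_a(B)-\mathbb{L}_a(C)\geq 2\mathbb{L}_a(B)>0$, which is exactly the defining inequality for $B_a(\top,\phi\vee\psi)$. This closes the soundness verification.
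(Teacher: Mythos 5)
Your proposal is correct and matches the paper's approach: the paper simply states that ``one easily checks that the axioms of the CN calculus are true for this interpretation,'' and your axiom-by-axiom verification (including the only nontrivial case, (SC), which you handle correctly by decomposing the cell into the disjoint pieces $A$, $B$, $C$ and using strict positivity of the weights) is exactly the routine check the paper leaves implicit.
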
 

To see that we do not have completeness, observe that we can 
express Savage's {\em Sure Thing Principle} in our language. 
In Savage's example this is about action. If an agent would 
do a thing if he would know $\phi$, and would do the same 
thing if he would know $\neg \phi$, then he should do the thing
in any case: 
\begin{quote}\small
A businessman contemplates buying a certain piece of property. He
considers the outcome of the next presidential election relevant. So,
to clarify the matter to himself, he asks whether he would buy if he
knew that the Democratic candidate were going to win, and decides that
he would. Similarly, he considers whether he would buy if he knew that
the Republican candidate were going to win, and again finds that he
would. Seeing that he would buy in either event, he decides that he
should buy, even though he does not know which event obtains, or will
obtain, as we would ordinarily say. 
Savage, \cite[p 21]{Savage1972:tfosSE}.
\end{quote} 
The following formula expresses this Sure Thing Principle, not 
about action but about belief: 
\[
   B_a( \phi, \psi) \land B_a (\neg \phi, \psi) 
   \rightarrow B_a (\top, \psi). 
\]
It is not hard to see that this principle is valid for weight models:
if $\psi$ has greater weight than $\neg \psi$ within the $\phi$ area,
and also within the $\neg \phi$ area, then it is a matter of adding
these weights to see that $\psi$ has greater weight than $\neg \psi$
in the whole domain. But the Sure Thing Principle is not a validity
for neighbourhood models. Let us consider the following urn example
modified from Ellsberg's paradox \cite{Ellsberg1961:raatsa}.

\begin{example}
  An urn contains 120 balls: 30 red balls, 30 green balls, and 60
  yellow or blue balls (in some unknown proportion). A ball $x$ will
  be pulled out of this urn, and there are 3 pairs of gambles where Alice
  has to pick her choice: 
\[
\begin{array}{|l|l|}
\hline G_r: x \text{ is red} & G_y: x \text{ is yellow}\\
\hline G_g: x \text{ is green} & G_b: x \text{ is blue}\\
\hline G_{rg}: x \text{ is either red or green} & G_{yb}: x \text{ is either yellow or blue}\\
\hline\end{array}
\]
Alice knows that the likelihood of $G_{rg}$ (1/2) is the same as
$G_{yb}$ (1/2), but is uncertain of the likelihood of $G_y$ and
$G_b$. Alice is ambiguity averse in her beliefs, which means 
that she is willing to bet $G_r$ against $G_y$, and $G_g$ against $G_b$.
\end{example}

To model this example, let
$W=\{red, green, yellow, blue\}$, and let the neighbourhood function
be the same for any $w\in W$, with
\[N^w(\{red, yellow\}) = \{ \{red\}, \{red,yellow\} \},\]
\[N^w(\{green, blue\}) = \{\{green\}, \{green,blue\}\},\]
$N^w(W)$ contains all subsets with at least 3 worlds.
The reason that such model exists is that all our neighbourhood properties
do not express anything about connections between different
neighbourhoods. Thus, in this model we have $B (G_r\vee G_y, G_r)$,
$B(G_g\vee G_b, G_g)$ and $B(\top,\neg (G_r\vee G_g))$ all true in
every world, in contradiction with the Sure Thing principle.

\commentout{
\begin{example}
  A politician is about to go to the U.S. to make a trade deal with
  Donald Trump, and may reason as follows: if Trump is wise, then it
  is likely to make a beneficial deal with him in this visit; and if
  Trump is stupid, then it is also likely to have a beneficial deal;
  however, wise or stupid, Trump is a protectionist, and thus it is
  unlikely to make good deal with him.
\end{example}

This example involves default reasoning. When thinking about a wise
Trump, what comes in the politician's mind is a wise president who
perfectly understand that protectionism would harm everyone, and then
the deal about to be made will benefit both countries. While thinking
Trump is stupid, the politician is building up an image of a stupid
leader that he/she can take advantage of. However, in general the
politician thinks Trump is conservative, unpredictable and most of
all, a protectionist, which implies that it would unlikely to have a
good trade deal with him.

To model the politician's beliefs, let $W=\{wg,sg,wb,sb\}$ where $wg$
is the situation that Trump is wise and the trade deal is good, $sg$
is where Trump is stupid and the trade deal is good, likewise $wb$ and
$sb$ are wise and stupid Trump respectively and the trade deal is bad
for the politician. Let the neighbourhood function be the same for any
$w\in W$ such that $N_w(wg,wg)=\{\{wg\},\{wg,wb\}\}$,
$N_w(sg,sg)=\{\{sg\},\{sg,sb\}\}$,
$N_w(W)=\{\{wb,sb\},\{wg,wb,sb\},\{sg,wb,sb\},W\}$. The reason that
such model exists is that all our neighbourhood properties does not
express anything about connections between different
neighbourhoods. Let $V(wise)=\{wg,wb\}$, $V(stupid)=\{sg,sb\}$,
$V(good)=\{wg,sg\}$ and $V(bad)=\{wb,sb\}$.  Then in this model we
have $B (wise, good)$, $B(stupid, good)$ and $B(\top,bad)$ all true in
every world, in contradiction with the Sure Thing principle. This also
shows that our conditional neighbourhood models are compatible with
default reasoning.
}

\begin{theorem}
The CN calculus is incomplete for epistemic weight models. 
\end{theorem}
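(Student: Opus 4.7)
The plan is to exhibit a single formula schema that is valid in every epistemic weight model but refuted in some conditional neighbourhood model. By the soundness half of Theorem~\ref{thm:CompCN} the CN calculus is sound over conditional neighbourhood models, so such a formula cannot be derivable in the calculus; combined with weight-model validity this yields incompleteness. Following the discussion preceding the theorem, the natural candidate is the Sure Thing Principle
\[
\mathrm{STP}:\quad B_a(\phi,\psi)\land B_a(\neg\phi,\psi)\rightarrow B_a(\top,\psi).
\]

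For validity in a weight model $\mathfrak{M}=(W,R,L,V)$ at world $w$, I would unfold the semantic clause for both conjuncts of the antecedent. The sets $[w]_a\cap\llbracket\psi\rrbracket$ and $[w]_a\cap\llbracket\neg\psi\rrbracket$ each decompose as the disjoint union of their intersections with $\llbracket\phi\rrbracket$ and with $\llbracket\neg\phi\rrbracket$; finite additivity of $\mathbb{L}_a$ --- guaranteed by boundedness~(*) --- then lets me add the two strict inequalities to obtain $\mathbb{L}_a([w]_a\cap\llbracket\psi\rrbracket)>\mathbb{L}_a([w]_a\cap\llbracket\neg\psi\rrbracket)$, which is exactly $B_a(\top,\psi)$.

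For refutation I would use the Ellsberg-style urn model presented just above. Take propositional letters $p_r,p_g,p_y,p_b$ with singleton valuations on $W=\{r,g,y,b\}$, set $\phi:=p_r\lor p_y$ and $\psi:=p_r\lor p_g$, and evaluate at any $w$. We have $\{r\}\in N^w(\{r,y\})$ and $\{r\}\subseteq\llbracket\psi\rrbracket=\{r,g\}$, so $B_a(\phi,\psi)$ holds; symmetrically $\{g\}\in N^w(\{g,b\})$ witnesses $B_a(\neg\phi,\psi)$. But every $Y\in N^w(W)$ satisfies $|Y|\geq 3$ while $|\llbracket\psi\rrbracket|=2$, so no such $Y$ sits inside $\llbracket\psi\rrbracket$, and $B_a(\top,\psi)$ fails at $w$.

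The main technical obstacle is the bookkeeping in the refuting model: one must totalise the given neighbourhood function to all $X\subseteq W$ while preserving (c), (ec), (d), (sc) and keeping it world-independent, so that $[w]_a=W$ and (ec) collapses to well-definedness. The three displayed $X$ are checked directly, and for the remaining $X$ one may set $N^w(X):=\{Y\subseteq X:|Y|>|X|/2\}$, after which (c), (d), (sc) reduce to routine combinatorics. Soundness from Theorem~\ref{thm:CompCN} then yields non-derivability of the STP instance, and its validity in weight models completes the incompleteness proof.
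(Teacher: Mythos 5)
Your proposal is correct and follows essentially the same route as the paper: the Sure Thing Principle $B_a(\phi,\psi)\land B_a(\neg\phi,\psi)\to B_a(\top,\psi)$ is shown valid on weight models by additivity of $\mathbb{L}_a$ over the disjoint $\phi$/$\neg\phi$ decomposition, refuted in the same Ellsberg-urn neighbourhood model, and then non-derivability follows from soundness of CN for conditional neighbourhood models (Theorem~\ref{thm:CompCN}). Your extra care in totalising the neighbourhood function via the majority rule and verifying (c), (d), (sc), (ec) is a detail the paper leaves implicit, but it does not change the argument.
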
 
\begin{proof}
  As we have seen, $B (G_r\vee G_y, G_r \vee G_g) \land B(G_g\vee G_b, G_r \vee G_g) \rightarrow B(\top,G_r\vee G_g)$
  is false in the above neighbourhood counterexample to Sure Thing.  So
  by Theorem \ref{thm:CompCN}, Sure Thing
  cannot be proved in the CN calculus. But Sure Thing is valid in the
  class of epistemic weighted models.
\end{proof}

\subsection*{A Complete Calculus for Epistemic Weight Models}

\cite{Segerberg1971:qpiams} and \cite{Gardenfors75} proposed a
complete logic $QP$ for epistemic weight models based on a language
$\mathcal{L}_{QP}$ given by the following BNF definition\footnote{We
  replaced ``0'' and ``1'' in \cite{Gardenfors75} with the equivalent
  notation ``$\bot$'' and ``$\top$'' respectively, and extend this
  language to multiagent case. }:
\[
\phi::=\top\mid p\mid\neg\phi\mid\left(\phi\wedge\phi\right)\mid \phi\succcurlyeq_a\phi
\]
$\succ_a$ and $\approx_a$ are given as usual. ${\succcurlyeq_a}\top$
functions as $K_a$ in epistemic modal logic. The complex formula
$\alpha_0\ldots\alpha_mE_a\beta_0\ldots\beta_m$ first proposed by
Segerberg in \cite{Segerberg1971:qpiams} is an abbreviation of the
formula expressing that for all worlds $w$ in the evaluated knowledge
cell of agent $a$, the number of $\alpha_i$ among
$\alpha_0\ldots\alpha_m$ true in $w$ is the same as those of $\beta_j$
among $\beta_0\ldots\beta_m$ true in $w$. $QP$ logic does not assume
that every world in an agent's knowledge cell has the same likelihood
for all propositions; and is with the following core axioms:
\begin{lyxlist}{00.00.0000}
\item [{(A4)}] 
$\alpha_0\alpha_1\ldots\alpha_mE_a\beta_0\beta_1\ldots\beta_m\wedge(\alpha_0\succcurlyeq_a\beta_0)\wedge\ldots\wedge(\alpha_{m-1}\succcurlyeq_a\beta_{m-1})\to(\beta_m\succcurlyeq_a\alpha_m)$ for all $m\geq 1.$
\end{lyxlist}
Nevertheless, not only we can express every notion in the
probabilistic language $\mathcal{L}_{QP}$, but using the results of
\cite{Gardenfors75} and \cite{Sco64:JMP} we can prove the following
completeness theorem as well (the proof is a simple adaptation of a proof found in \cite{Gardenfors75}, just to observe that every other axioms in \cite{Gardenfors75} except (A4) is provable in CN calculus).

\begin{theorem}
\emph{CN}$\oplus$\emph{(A4)} is complete for epistemic weight models.
\end{theorem}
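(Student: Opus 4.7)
The plan is to lift Gårdenfors's completeness proof for $QP$ to CN$\oplus$(A4) via mutually truth-preserving translations between $\mathcal{L}_{CN}$ and $\mathcal{L}_{QP}$. Define $(\cdot)^{\dagger}:\mathcal{L}_{QP} \to \mathcal{L}_{CN}$ commuting with Booleans and sending $\alpha \succcurlyeq_a \beta$ to $\neg B_a(\beta^{\dagger} \leftrightarrow \neg\alpha^{\dagger}, \beta^{\dagger})$, the expansion of the abbreviation introduced in Section \ref{section:cns}. Define $(\cdot)^{\ast}:\mathcal{L}_{CN} \to \mathcal{L}_{QP}$ commuting with Booleans and sending $B_a(\phi,\psi)$ to $(\phi^{\ast} \wedge \psi^{\ast}) \succ_a (\phi^{\ast} \wedge \neg\psi^{\ast})$. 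A routine induction on formulas, using the likelihood-based semantics of both languages, shows that both translations preserve truth in every epistemic weight model.

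Next I would verify that the $\dagger$-image of every QP axiom other than (A4) is derivable in the CN calculus. The S5 principles governing $\succcurlyeq_a \top$ follow from (Dist-K), (T), (5B), (4B) and (Nec-K) together with the equivalence $\alpha \succcurlyeq_a \top \leftrightarrow K_a\alpha$ noted at the end of Section \ref{section:cns}. The axioms expressing the basic comparative-likelihood properties (equivalence of conditions, monotonicity, consistency, strong commitment, totality) translate into straightforward consequences of (EC), (M), (C), (D), (SC) and the totality principle derived above. Writing out each case is pure bookkeeping; this is exactly the observation recorded in the statement of the theorem.

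With these ingredients in place, the argument proper is standard. Suppose $\phi \in \mathcal{L}_{CN}$ is valid in every epistemic weight model. Then $\phi^{\ast}$ is valid in every such model, so by Gårdenfors's theorem $QP \vdash \phi^{\ast}$. Applying $\dagger$ line by line converts this into a CN$\oplus$(A4)-derivation of $(\phi^{\ast})^{\dagger}$: occurrences of (A4) are kept as (A4), and every other QP axiom is replaced by its CN-derivation supplied in the previous step. A round-trip lemma $\mathrm{CN} \vdash \phi \leftrightarrow (\phi^{\ast})^{\dagger}$, proved by induction on $\phi$, then yields CN$\oplus$(A4) $\vdash \phi$.

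The main obstacle is the round-trip lemma. Its only nontrivial inductive case is $\phi = B_a(\alpha,\beta)$, where after Boolean unfolding the $\dagger$-image becomes $B_a\bigl((\alpha \wedge \beta) \leftrightarrow \neg(\alpha \wedge \neg\beta),\ \alpha \wedge \beta\bigr)$. Since the condition $(\alpha \wedge \beta) \leftrightarrow \neg(\alpha \wedge \neg\beta)$ is propositionally equivalent to $\alpha$, (Nec-K) and (EC) reduce this to $B_a(\alpha, \alpha \wedge \beta)$; and $B_a(\alpha, \alpha \wedge \beta) \leftrightarrow B_a(\alpha,\beta)$ follows from axioms (C) and (M) (the latter applied to the tautology $K_a((\alpha \wedge \beta) \to \beta)$). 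Combining this with the inductive hypothesis on the immediate subformulas closes the lemma and, with it, the completeness argument.
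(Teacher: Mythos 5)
Your proposal is correct and follows essentially the same route the paper intends: the paper's proof is only the parenthetical remark that one adapts G\"{a}rdenfors's completeness proof by observing that every $QP$ axiom except (A4) is CN-derivable under the translations $Tr_1$/$Tr_2$ (which the paper introduces in the next subsection and shows truth-preserving on weight models). You have simply made explicit the pieces the paper leaves implicit, in particular the round-trip lemma $\mathrm{CN}\vdash\phi\leftrightarrow(\phi^{\ast})^{\dagger}$ and its key case via (EC), (C) and (M).
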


\subsection*{Comparing Expressive Power}

In this subsection we compare the expressive power between
$\mathcal{L}_{CN}$ and $\mathcal{L}_{QP}$, restricting our
attention to the single-agent case. As is shown in
Section \ref{section:cns}, we can translate $\mathcal{L}_{CN}$ into
$\mathcal{L}_{QP}$ by defining
$Tr_1:\mathcal{L}_{CN}\to\mathcal{L}_{QP}$ with key case:
\[
Tr_1(B(\alpha,\beta))=Tr_1(\alpha\wedge\beta)\succ Tr_1(\alpha\wedge\neg\beta)),
\]
which express that the agent considers $\alpha\wedge\beta$ is
more likely than $\alpha\wedge\neg\beta$. Likewise we can define the
translation $Tr_2$ from $\mathcal{L}_{QP}$ to $\mathcal{L}_{CN}$ by key case
\[
Tr_2(\alpha\succcurlyeq\beta)=\neg B(Tr_2(\alpha\leftrightarrow\neg\beta),Tr_2(\beta)).
\]
It is easy to prove that both translations preserve truth on
weight models. 
As for conditional neighbourhood models, consider the
following truth definition for $\mathcal{L}_{QP}$:
\[
\mathfrak{M},w\vDash_{qp}\alpha\succcurlyeq\beta\text{ iff }\left\llbracket\neg\alpha\wedge\beta\right\rrbracket ^{qp}_{{\mathfrak M}}\notin N^w(\left\llbracket\alpha\leftrightarrow\neg\beta\right\rrbracket ^{qp}_{{\mathfrak M}}),
\]
which is equivalent to( by condition (d) and $\alpha\succ\beta$::=$\neg\beta\succcurlyeq\alpha$),
\[
\mathfrak{M},w\vDash_{qp}\alpha\succ\beta\text{ iff }\left\llbracket\alpha\wedge\neg\beta\right\rrbracket ^{qp}_{{\mathfrak M}}\in N^w(\left\llbracket\alpha\leftrightarrow\neg\beta\right\rrbracket ^{qp}_{{\mathfrak M}}).
\]

Such truth condition parallels with our translation $Tr_2$. Furthermore, to show that $Tr_1$ preserves truth value on conditional neighbourhood models, we can prove by induction on construction of $\mathcal{L}_{CN}$-formulas, and establish the following equivalences:
\[
\begin{array}{ccl}
\mathfrak{M},w\vDash_{qp} Tr_1(B(\alpha,\beta)) & \text{ iff } & \mathfrak{M},w\vDash_{qp}\alpha'\wedge\beta'\succ\alpha'\wedge\neg\beta'\\
& \text{ iff } & \left\llbracket(\alpha'\wedge\beta')\wedge\neg(\alpha'\wedge\neg\beta')\right\rrbracket ^{qp}_{{\mathfrak M}}\in N^w(\left\llbracket(\alpha'\wedge\beta')\leftrightarrow\neg(\alpha'\wedge\neg\beta')\right\rrbracket ^{qp}_{{\mathfrak M}})\\
& \text{ iff } & \left\llbracket\alpha'\wedge\beta'\right\rrbracket ^{qp}_{{\mathfrak M}}\in N^w(\left\llbracket\alpha'\right\rrbracket ^{qp}_{{\mathfrak M}})\\
& \text{ iff } & \left\llbracket\beta'\right\rrbracket ^{qp}_{{\mathfrak M}}\in N^w(\left\llbracket\alpha'\right\rrbracket ^{qp}_{{\mathfrak M}})\text{ (by condition (n)*)}\\
& \text{ iff } & \left\llbracket\beta\right\rrbracket _{{\mathfrak M}}\in N^w(\left\llbracket\alpha\right\rrbracket _{{\mathfrak M}})\text{ (by induction hypothesis)}\\
& \text{ iff } & \mathfrak{M},w\vDash B(\alpha,\beta),\\
\end{array}
\]
where $\alpha'=Tr_1(\alpha)$ and $\beta'=Tr_1(\beta)$.
Therefore we can prove that both $Tr_1$ and $Tr_2$ preserve truth value on conditional neighbourhood models.

However we can design models violating the comparison principle of
Figure \ref{fig:Comparison}.  $\mathcal{L}_{QP}$ can differentiate
models where the principle holds from those where it does not, while
$\mathcal{L}_{CN}$ cannot. A \emph{comparison model} $\mathfrak{N}$ is
a triple $(W,\succeq,V)$ where $W$ is a non-empty set of worlds,
${\succeq}\subseteq\mathcal{P}W\times\mathcal{P}W$ is a relation between
propositions, and $V$ is a valuation. Truth definition for
$\mathcal{L}_{QP}$ is with the following key clause:
\[
\mathfrak{N},w\vDash_2\alpha\succcurlyeq\beta\text{ iff }\left\llbracket\alpha\right\rrbracket_{{\mathfrak N}}\succeq\left\llbracket\beta\right\rrbracket_{{\mathfrak N}}.
\]
Furthermore the key clause in the truth condition for
$\mathcal{L}_{CN}$ is given by:
\[
\mathfrak{N},w\vDash_1 B(\alpha,\beta)\text{ iff }\left\llbracket\alpha\wedge\beta\right\rrbracket_{{\mathfrak N}}\succeq\left\llbracket\alpha\wedge\neg\beta\right\rrbracket_{{\mathfrak N}},
\]
which parallels with translation $Tr_1$ at the semantic level.

Let $N_1=(W,\succeq_1,V)$ and $N_2=(W,\succeq_2,V)$ be comparison models such that:
\begin{enumerate}
\item $W=\{\{p,q\},\{p\},\{q\},\emptyset\}$
\item ${\succeq_1}=\{(\{\{p\},\{p,q\}\},\{\{q\},\{p,q\}\}),(\{p\},\{q\})\}$,
\item ${\succeq_2}=\{(\{p\},\{q\})\}$,
\item $w\in V(r)$ iff $r\in w$.
\end{enumerate}
The only difference between $\mathfrak{N}_1$ and $\mathfrak{N}_2$ is
that
$\left\llbracket p\right\rrbracket_{\mathfrak{N}_1}\succeq_1
\left\llbracket q\right\rrbracket_{\mathfrak{N}_1}$ but not
$\left\llbracket p\right\rrbracket_{\mathfrak{N}_2}\succeq_2
\left\llbracket q\right\rrbracket_{\mathfrak{N}_2}$.  Thus $N_2$
violates the \textbf{comparison principle}, and
$\neg(p\succcurlyeq q)\wedge(p\wedge\neg q\succcurlyeq\neg p\wedge q)$
is valid on $\mathfrak{N}_2$ but not on $\mathfrak{N}_1$. However we can
prove that $\mathfrak{N}_1$ and $\mathfrak{N}_2$ satisfy the same set of
$\mathcal{L}_{CN}$-formulas. The crucial fact is to observe that we
only use the comparison relation for disjoint propositions for
$\vDash_1$. For instance
$B(p\leftrightarrow\neg q,p)=Tr_2(p\succcurlyeq q)=Tr_2(p\wedge\neg
q\succcurlyeq p\wedge\neg q)$ is valid on both $\mathfrak{N}_1$ and
$\mathfrak{N}_2$, because for each $i\in\{1,2\}$,
\[
\begin{array}{ccl}
\mathfrak{N}_i,w\vDash_1 B(p\leftrightarrow\neg q,p)) & \text{ iff } & \left\llbracket (p\leftrightarrow\neg q)\wedge p\right\rrbracket_{{\mathfrak N}_i}\succeq_i\\
& & \left\llbracket(p\leftrightarrow\neg q)\wedge\neg p\right\rrbracket_{{\mathfrak N}_i}\\
& \text{ iff } & \left\llbracket p\wedge\neg q\right\rrbracket_{{\mathfrak N}_i}\succeq_i\left\llbracket\neg p\wedge q\right\rrbracket_{{\mathfrak N}_i}\\
& \text{ iff } & \{p\}\succeq_i\{q\}, \text{ which holds for}\\
& & \text{either }i\in\{1,2\}.
\end{array}
\]
Therefore $\mathcal{L}_{QP}$ is more expressive than $\mathcal{L}_{CN}$. We conclude $\mathcal{L}_{CN}$ as a core logic for conditional belief as willingness to bet.

\section{Public Announcement for Conditional Neighborhood Models}\label{section:pub}

Public announcement update for weight models parallels Bayesian update
in probability theory. Public announcement update for probabilistic
logic was first treated in \cite{Kooi03:kcac}, and more complicated
probabilistic updates were discussed in \cite{Benthem2003:cpmul} and
\cite{BenGerKoo09:duwp}. As was mentioned in the introduction, public
announcement updates may destroy reasonable neighbourhood conditions.
The good news is that conditional neighbourhood models are more well
behaved.  We propose two ways of public announcement updating:
deleting points and cutting links; and show reduction axioms for
either of them. These shows that our neighbourhood conditions are some
core principles that preserved by Bayesian update.

\subsection{Deleting Points}
The first approach is the usual one for public announcement update,
which is restricting the domain to $\phi$-worlds after announcing
$\phi$. It assumes that only facts (true propositions at the current
world) can be publicly announced. Public announcements create common 
knowledge, but it need not be the case that a fact that gets announced
becomes true after the update; Moore sentences are a well-known exception.

The language $\mathcal{L}_{PC}$ is the result of extending our base language $\mathcal{L}_{CN}$
with a public announcement operator: 
\[
\phi::=\top\mid p\mid\neg\phi\mid\left(\phi\wedge\phi\right)\mid B_{a}(\phi,\phi)\mid[\phi]\phi
\]
If $\mathfrak{M}=(W,N,V)$ is a conditional neighborhood model, $\sim$ is the induced epistemic relation, 
and $\phi$ is a formula of the $\mathcal{L}_{PC}$ language, then
$\mathfrak{M}^{\phi}=(W^{\phi},{}^{\phi}N,V^{\phi})$ is given by: 
\begin{itemize}
\item $W^{\phi}=\{w\in W\mid\mathfrak{M},w\vDash\phi\}$ 
\item $w\sim_{a}^{\phi}u$ iff $\mathfrak{M},w\vDash\phi$, $\mathfrak{M},u\vDash\phi$
and $w\sim_{a}u$ 
\item $^{\phi}N_{a}^{w}(X)=\left\{
\begin{array}{ll}
N_{a}^{w}(X) & \text{if }X\subseteq W^{\phi}\text{ and }w\in W^{\phi}\\
\text{undefined} & \text{otherwise}
\end{array}\right.$
\item $V^{\phi}(p)=V(p)\cap W^{\phi}$
\end{itemize}

\begin{example} \label{example:lottery2} 
  As an example, consider Alice's lottery situation again.  Alice
  knows there are 10,000 lottery tickets numbered 0000 through 9999.
  Alice believes ticket $t$ is winning (and buys it).  Let $n$
  represent the world where ticket $n$ is winning. Then Alice's belief
  is given by a conditional neighborhood model with
  $N^w_a (X) = \{ Y \subseteq X \mid t \in Y \}$ if
  $t \in X$, and
  $N^w_a (X) = \{ Y \subseteq X \mid |Y| > \frac 1 2 | X | \}$ if
  $t \notin X$.  Now Alice receives the information that $v \neq t$ is
  the winning ticket.  Then $v = w$, the updated model has universe
  $\{ v \}$, and Alice updates her belief to $N'$ with
  ${N'}^v_a (\{ v \}) = \{ \{ v \} \}$. The updated model satisfies
  the conditions for a conditional neighborhood model.
\end{example}

\begin{definition}
  Semantics for $\vDash_{PC}$: let $\mathfrak{M}=\left(W,N,V\right)$ be
  a conditional neighborhood model, let $w\in W$.
\[
\mathfrak{M},w\vDash_{PC}[\phi]\psi\text{ iff }\mathfrak{M},w\vDash_{PC}\phi\text{ implies }\mathfrak{M}^{\phi},w\vDash_{PC}\psi.
\]
\end{definition}

A complete calculus for $\vDash_{PC}$ consists of the calculus for CN,
plus the usual Reduction Axioms of public announcement update for
boolean cases and the following key Reduction Axiom (call this system
PC):
\begin{itemize}
\item $[\phi]B_{a}(\psi,\chi)\leftrightarrow(\phi\to B_{a}(\phi\wedge[\phi]\psi,[\phi]\chi))$
\end{itemize}

In Appendix \ref{Appendix:CompPCandPC'} we prove that the \emph{PC} Calculus is sound and complete w.r.t. $\vDash_{PC}$. 

\subsection{Cutting Links}

We can generalize public announcement of facts to public announcement
of truth values.  In announcing the value of $\phi$, it depends on the
truth value of $\phi$ in the actual world whether $\phi$ or
$\neg \phi$ gets announced.

The language $\mathcal{L}_{PC\pm}$ for this kind of update is given by the following BNF: 
\[
\phi::=\top\mid p\mid\neg\phi\mid\left(\phi\wedge\phi\right)\mid B_{a}(\phi,\phi)\mid[\pm\phi]\phi
\]
To capture the intuition of such updates for conditional neighbourhoods,
we use the following mechanism that cuts epistemic relations between
$\phi$-worlds and $\neg\phi$-worlds after announcing $\phi$.

If $\mathfrak{M}=(W,N,V)$ is a conditional neighborhood model, $\sim$ is the induced epistemic relation, 
and $\phi$ is a formula of the $\mathcal{L}_{PC\pm}$ language, then
$\mathfrak{M}^{\pm\phi}=(W{}^{\pm\phi},{}^{\text{\textpm}\phi}N,V^{\pm\phi})$
is given by: 
\begin{itemize}
\item $W^{\pm\phi}=W$ 
\item $\sim^{\pm\phi}=\{(w,v)\in W^{2}\mid w\sim_{a}v\text{ and }\mathfrak{M},w\vDash\phi\text{ iff }\mathfrak{M},v\vDash\phi\}$
\item $^{\pm\phi}N_{a}^{w}(X)=N_{a}^{w}(X\cap[w]_{a}^{\text{\textpm}\phi})$
\item $V^{\pm\phi}=V$
\end{itemize}

\begin{definition}
Semantics for $\vDash_{PC\pm}$: let $\mathfrak{M}=\left(W,N,V\right)$
be a conditional neighborhood model, let $w\in W$.
\[
\mathfrak{M},w\vdash_{PC\pm}[\phi]\psi\text{ iff }\mathfrak{M}^{\pm\phi},w\vdash_{PC\pm}\psi.
\]
\end{definition}

The system PC$\pm$ for $\vdash_{PC\pm}$ consists of the calculus for
CN, plus the following Reduction Axioms:
\begin{itemize}
\item $[\pm\phi]p\leftrightarrow p$
\item $[\pm\phi]\neg\psi\leftrightarrow\neg[\pm\phi]\psi$
\item $[\pm\phi](\psi\wedge\chi)\leftrightarrow[\pm\phi]\psi\wedge[\pm\phi]\chi$
\item $[\pm\phi]B_{a}(\psi,\chi)\leftrightarrow(\phi\to B_{a}(\phi\wedge[\pm\phi]\psi,[\pm\phi]\chi))\wedge(\neg\phi\to B_{a}(\neg\phi\wedge[\pm\phi]\psi,[\pm\phi]\chi))$
\end{itemize}

Also in Appendix \ref{Appendix:CompPCandPC'} we prove that the \emph{PC}$\pm$ Calculus is sound and complete w.r.t. $\vDash_{PC\pm}$. 

\section{Conclusions and Future Work}\label{section:cafw}

In Section \ref{section:int}, we illustrated that public announcement
update may not preserve reasonable neighbourhood conditions. To
overcome this problem, we introduced conditional neighourhood
semantics in Section \ref{section:cns}. We gave an alternative
interpretation for this system in Section \ref{section:wei}, and then
in Section \ref{section:pub} we gave two flavours of public
announcement update for conditional neighborhood semantics. Because
public announcement update for epistemic weight models is basically Bayesian
update in a logical setting, and because the complete calculus CN for
conditional neighourhood models is a subsystem of a complete
probabilistic logic CN$\oplus$(A4) (as shown in Section
\ref{section:wei}), our investigations show that CN can be viewed as a
core logic of rational belief and belief update that is compatible
with the Bayesian picture of inference, but more general. Conditional
neighborhood models generalize epistemic weight models, and this
generalization creates room for modelling ambiguity aversion in belief as
willingness to bet.

In Section \ref{section:wei} we have shown that $\mathcal{L}_{CN}$ has
weaker expressive power than $\mathcal{L}_{QP}$. Our conditional
neighourhood semantics for $\mathcal{L}_{CN}$ allows us to develop a
reasoning system CN that is not yet committed to a probabilistic
numerical interpretation of belief.  This might be a convenient
starting point for further investigation of counterfactual
reasoning. In Section \ref{section:cns} we have assumed that
conditioning with information that contradicts an agent's knowledge
will cause the agent to refrain believing anything, but in future work
we may relax this constraint, by allowing to visit knowledge cells
other than the current one when a neighbourhood function is
conditioned with propositions that conflict with current knowledge. A
naive way to do so is to incorporate the selection function $f$ for
counterfactuals proposed by Stalnaker in \cite{stalnaker1968theory} in
our framework. When a proposition $X$ is disjoint with agent $a$'s
current knowledge cell $[w]_a$, the selection function $f$ would guide
us to an $X$-world $u=f_a(X,w)$, and then let the neighbourhood
$N_a^w(X)$ be $N_a^u(X)$, which is the neighourhood conditioned by $X$
at $[u]_a$.

While subjective conditional beliefs given by neighbourhood functions
suggest how agents' beliefs would change by public announcements,
further updates like public lies and recovery from lies may allow us
to represent further details of agents' beliefs in an objective way.
Here, recovery is to free agents from the influence of lies that were
accepted as true in the past.  These two kinds of updates give us
powerful tools to test what a agent would believe after providing each
possible piece of information, which in turn would inform us the
agent's conditional beliefs or subjective probability. It is future
work to compare and combine these two approaches: subjective
conditional beliefs informative for belief update and objective belief
changes to conditional beliefs.

Neighourhood structures have also been used to describe the pieces of
evidence that agents accept
(\cite{van2011dynamic},\cite{van2012evidence},\cite{van2014evidence}). In
this approach, each proposition in a neighbourhood is interpreted as a
piece of evidence, instead of a belief; and because evidences are
usually acquired in various situations, such evidence neighbourhoods
may have contradictory propositions. Furthermore, beliefs are
generated from certain consistent subsets of the evidence
neighbourhood. Even though evidence models and conditional
neighbourhood models provide different perspectives on belief, we may
be able to combine them in future.  In one direction, conditional
beliefs or even subjective probability could be generated from certain
evidence models, while the way evidence is involved in belief
formation may provide information about the strengths of the resulting
beliefs.  In the other direction, our conditional beliefs might 
serve as prior knowledge for specifying the credence of evidence. 

\bibliographystyle{eptcs}
\bibliography{CBKP}

\begin{thebibliography}{10}
\providecommand{\bibitemdeclare}[2]{}
\providecommand{\surnamestart}{}
\providecommand{\surnameend}{}
\providecommand{\urlprefix}{Available at }
\providecommand{\url}[1]{\texttt{#1}}
\providecommand{\href}[2]{\texttt{#2}}
\providecommand{\urlalt}[2]{\href{#1}{#2}}
\providecommand{\doi}[1]{doi:\urlalt{http://dx.doi.org/#1}{#1}}
\providecommand{\bibinfo}[2]{#2}

\bibitemdeclare{inproceedings}{DBLP:conf/atal/BalbianiDL16}
\bibitem{DBLP:conf/atal/BalbianiDL16}
\bibinfo{author}{Philippe \surnamestart Balbiani\surnameend},
  \bibinfo{author}{David \surnamestart Fern\'{a}ndez-Duque\surnameend} \&
  \bibinfo{author}{Emiliano \surnamestart Lorini\surnameend}
  (\bibinfo{year}{2016}): \emph{\bibinfo{title}{A Logical Theory of Belief
  Dynamics for Resource-Bounded Agents}}.
\newblock In: {\sl \bibinfo{booktitle}{Proceedings of the 2016 International
  Conference on Autonomous Agents \&\#38; Multiagent Systems}},
  \bibinfo{series}{AAMAS '16}, \bibinfo{publisher}{International Foundation for
  Autonomous Agents and Multiagent Systems}, \bibinfo{address}{Richland, SC},
  pp. \bibinfo{pages}{644--652}.
\newblock \urlprefix\url{http://dl.acm.org/citation.cfm?id=2936924.2937020}.

\bibitemdeclare{article}{Benthem2003:cpmul}
\bibitem{Benthem2003:cpmul}
\bibinfo{author}{Johan \surnamestart van Benthem\surnameend}
  (\bibinfo{year}{2003}): \emph{\bibinfo{title}{Conditional Probability Meets
  Update Logic}}.
\newblock {\sl \bibinfo{journal}{Journal of Logic, Language and Information}}
  \bibinfo{volume}{12}(\bibinfo{number}{4}), pp. \bibinfo{pages}{409--421},
  \doi{10.1023/A:1025002917675}.

\bibitemdeclare{inproceedings}{van2012evidence}
\bibitem{van2012evidence}
\bibinfo{author}{Johan \surnamestart van Benthem\surnameend},
  \bibinfo{author}{David \surnamestart Fern{\'{a}}ndez{-}Duque\surnameend} \&
  \bibinfo{author}{Eric \surnamestart Pacuit\surnameend}
  (\bibinfo{year}{2012}): \emph{\bibinfo{title}{Evidence Logic: {A} New Look at
  Neighborhood Structures}}.
\newblock In: {\sl \bibinfo{booktitle}{Advances in Modal Logic 9, papers from
  the ninth conference on "Advances in Modal Logic," held in Copenhagen,
  Denmark, 22-25 August 2012}}, pp. \bibinfo{pages}{97--118}.
\newblock
  \urlprefix\url{http://www.aiml.net/volumes/volume9/Benthem-Fernandez-Duque-Pacuit.pdf}.

\bibitemdeclare{article}{van2014evidence}
\bibitem{van2014evidence}
\bibinfo{author}{Johan \surnamestart van Benthem\surnameend},
  \bibinfo{author}{David \surnamestart Fern{\'{a}}ndez{-}Duque\surnameend} \&
  \bibinfo{author}{Eric \surnamestart Pacuit\surnameend}
  (\bibinfo{year}{2014}): \emph{\bibinfo{title}{Evidence and plausibility in
  neighborhood structures}}.
\newblock {\sl \bibinfo{journal}{Ann. Pure Appl. Logic}}
  \bibinfo{volume}{165}(\bibinfo{number}{1}), pp. \bibinfo{pages}{106--133},
  \doi{10.1016/j.apal.2013.07.007}.

\bibitemdeclare{article}{BenGerKoo09:duwp}
\bibitem{BenGerKoo09:duwp}
\bibinfo{author}{Johan \surnamestart van Benthem\surnameend},
  \bibinfo{author}{Jelle \surnamestart Gerbrandy\surnameend} \&
  \bibinfo{author}{Barteld~P. \surnamestart Kooi\surnameend}
  (\bibinfo{year}{2009}): \emph{\bibinfo{title}{Dynamic Update with
  Probabilities}}.
\newblock {\sl \bibinfo{journal}{Studia Logica}}
  \bibinfo{volume}{93}(\bibinfo{number}{1}), pp. \bibinfo{pages}{67--96},
  \doi{10.1007/s11225-009-9209-y}.

\bibitemdeclare{article}{van2011dynamic}
\bibitem{van2011dynamic}
\bibinfo{author}{Johan \surnamestart van Benthem\surnameend} \&
  \bibinfo{author}{Eric \surnamestart Pacuit\surnameend}
  (\bibinfo{year}{2011}): \emph{\bibinfo{title}{Dynamic Logics of
  Evidence-Based Beliefs}}.
\newblock {\sl \bibinfo{journal}{Studia Logica}}
  \bibinfo{volume}{99}(\bibinfo{number}{1-3}), pp. \bibinfo{pages}{61--92},
  \doi{10.1007/s11225-011-9347-x}.

\bibitemdeclare{inproceedings}{EijckRenne2016:upkb}
\bibitem{EijckRenne2016:upkb}
\bibinfo{author}{Jan \surnamestart van Eijck\surnameend} \&
  \bibinfo{author}{Bryan \surnamestart Renne\surnameend}
  (\bibinfo{year}{2016}): \emph{\bibinfo{title}{Update, Probability, Knowledge
  and Belief}}.
\newblock In: {\sl \bibinfo{booktitle}{Advances in Modal Logic 11, proceedings
  of the 11th conference on "Advances in Modal Logic," held in Budapest,
  Hungary, August 30 - September 2, 2016}}, pp. \bibinfo{pages}{551--570}.
\newblock
  \urlprefix\url{http://www.aiml.net/volumes/volume11/vanEijck-Renne.pdf}.

\bibitemdeclare{article}{Ellsberg1961:raatsa}
\bibitem{Ellsberg1961:raatsa}
\bibinfo{author}{Daniel \surnamestart Ellsberg\surnameend}
  (\bibinfo{year}{1961}): \emph{\bibinfo{title}{Risk, Ambiguity, and the Savage
  Axioms*}}.
\newblock {\sl \bibinfo{journal}{The Quarterly Journal of Economics}}
  \bibinfo{volume}{75}(\bibinfo{number}{4}), p. \bibinfo{pages}{643},
  \doi{10.2307/1884324}.

\bibitemdeclare{article}{Gardenfors75}
\bibitem{Gardenfors75}
\bibinfo{author}{Peter \surnamestart G{\"{a}}rdenfors\surnameend}
  (\bibinfo{year}{1975}): \emph{\bibinfo{title}{Qualitative probability as an
  intensional logic}}.
\newblock {\sl \bibinfo{journal}{J. Philosophical Logic}}
  \bibinfo{volume}{4}(\bibinfo{number}{2}), pp. \bibinfo{pages}{171--185},
  \doi{10.1007/BF00693272}.

\bibitemdeclare{article}{ghosh2012comparing}
\bibitem{ghosh2012comparing}
\bibinfo{author}{Sujata \surnamestart Ghosh\surnameend} \&
  \bibinfo{author}{Dick \surnamestart de~Jongh\surnameend}
  (\bibinfo{year}{2013}): \emph{\bibinfo{title}{Comparing strengths of beliefs
  explicitly}}.
\newblock {\sl \bibinfo{journal}{Logic Journal of the {IGPL}}}
  \bibinfo{volume}{21}(\bibinfo{number}{3}), pp. \bibinfo{pages}{488--514},
  \doi{10.1093/jigpal/jzs050}.

\bibitemdeclare{book}{Jaynes:pttlos}
\bibitem{Jaynes:pttlos}
\bibinfo{author}{E.~T. \surnamestart Jaynes\surnameend} (\bibinfo{year}{2003}):
  \emph{\bibinfo{title}{Probability Theory: The Logic of Science}}.
\newblock \bibinfo{publisher}{Cambridge University Press},
  \doi{10.1017/CBO9780511790423}.

\bibitemdeclare{phdthesis}{Kooi03:kcac}
\bibitem{Kooi03:kcac}
\bibinfo{author}{Barteld~P. \surnamestart Kooi\surnameend}
  (\bibinfo{year}{2003}): \emph{\bibinfo{title}{Knowledge, Chance, and
  Change}}.
\newblock Ph.D. thesis, \bibinfo{school}{Groningen University}.

\bibitemdeclare{techreport}{Neumann1951:vt}
\bibitem{Neumann1951:vt}
\bibinfo{author}{John~von \surnamestart Neumann\surnameend}
  (\bibinfo{year}{1951}): \emph{\bibinfo{title}{Various techniques used in
  connection with random digits}}.
\newblock \bibinfo{type}{Technical Report} \bibinfo{number}{12: 36},
  \bibinfo{institution}{National Bureau of Standards Applied Math Series}.

\bibitemdeclare{book}{Savage1972:tfosSE}
\bibitem{Savage1972:tfosSE}
\bibinfo{author}{L.J. \surnamestart Savage\surnameend} (\bibinfo{year}{1972}):
  \emph{\bibinfo{title}{The Foundations of Statistics}}.
\newblock \bibinfo{series}{Dover Books on Mathematics Series},
  \bibinfo{publisher}{Dover Publications}.
\newblock \urlprefix\url{https://books.google.nl/books?id=y3\_1DAAAQBAJ}.

\bibitemdeclare{article}{Sco64:JMP}
\bibitem{Sco64:JMP}
\bibinfo{author}{Dana \surnamestart Scott\surnameend} (\bibinfo{year}{1964}):
  \emph{\bibinfo{title}{Measurement structures and linear inequalities}}.
\newblock {\sl \bibinfo{journal}{Journal of Mathematical Psychology}}
  \bibinfo{volume}{1}(\bibinfo{number}{2}), pp. \bibinfo{pages}{233 -- 247},
  \doi{10.1016/0022-2496(64)90002-1}.

\bibitemdeclare{article}{Segerberg1971:qpiams}
\bibitem{Segerberg1971:qpiams}
\bibinfo{author}{Krister \surnamestart Segerberg\surnameend}
  (\bibinfo{year}{1971}): \emph{\bibinfo{title}{Qualitative Probability in a
  Modal Setting}}.
\newblock {\sl \bibinfo{journal}{Studies in Logic and the Foundations of
  Mathematics}} \bibinfo{volume}{63}, pp. \bibinfo{pages}{341 -- 352},
  \doi{10.1016/S0049-237X(08)70852-8}.
\newblock \bibinfo{note}{Proceedings of the Second Scandinavian Logic
  Symposium}.

\bibitemdeclare{inbook}{stalnaker1968theory}
\bibitem{stalnaker1968theory}
\bibinfo{author}{Robert~C. \surnamestart Stalnaker\surnameend}
  (\bibinfo{year}{1981}): \emph{\bibinfo{title}{A Theory of Conditionals}}, pp.
  \bibinfo{pages}{41--55}.
\newblock \bibinfo{publisher}{Springer Netherlands},
  \bibinfo{address}{Dordrecht}, \doi{10.1007/978-94-009-9117-0_2}.

\end{thebibliography}
\newpage

\appendix

\section{Alternative Definition of Conditional Neighborhood Models} 
\label{appendix:alternativeDef}

\begin{lemma}
\label{lem:implied-conditions}Let ${\mathfrak M}=\left(W,N,V\right)$
be a conditional neighborhood model. Then ${\mathfrak M}$ satisfies the
following conditions for any $a\in Ag$, $x\in W$ and $X\subseteq W$:
\begin{description}
\item [{(m)}] $\forall Y\subseteq Z\subseteq X\cap[w]_{a}:$ if $Y\in N_{a}^{w}(X)$,
then $Z\in N_{a}^{w}(X)$;
\item [{(no-emptyset)}] $\emptyset\notin N_{a}^{w}(X)$;
\item [{(n){*}}] if $X\cap[w]_{a}\neq\emptyset$, then $X\cap[w]_{a}\in N_{a}^{w}(X)$;
\item [{($\emptyset$)}] if $X\cap[w]_{a}=\emptyset$, then $N_{a}^{w}(X)=\emptyset$.
\end{description}
\end{lemma}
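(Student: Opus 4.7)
The plan is to derive all four stated conditions from (c), (d) and (sc); the condition (ec) plays no role. I would prove them in the order (m), ($\emptyset$), (n)*, (no-emptyset), because the main device for the last three is essentially one trick: feeding (sc) the witness $Y=\emptyset$ to force $X\cap[w]_a$ into the neighbourhood.

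For (m), take $Y\in N_a^w(X)$ and $Y\subseteq Z\subseteq X\cap[w]_a$. The case $Y=Z$ is trivial, so assume $Y\subsetneq Z$. Applying (d) to $Y$ gives $X\cap[w]_a-Y\notin N_a^w(X)$, and then (sc), with this same $Y$ and $Z$, immediately places $Z$ in $N_a^w(X)$. Note that this step uses (d) purely as a way to produce the non-membership hypothesis of (sc).

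For ($\emptyset$), assume $X\cap[w]_a=\emptyset$. Condition (c) forces every element of $N_a^w(X)$ to be a subset of $\emptyset$, i.e.\ to equal $\emptyset$, so $N_a^w(X)\subseteq\{\emptyset\}$. If $\emptyset\in N_a^w(X)$ then (d) applied to $Y=\emptyset$ yields $\emptyset=X\cap[w]_a-\emptyset\notin N_a^w(X)$, a contradiction; hence $N_a^w(X)=\emptyset$. For (n)*, assume $X\cap[w]_a\neq\emptyset$ and suppose toward a contradiction that $X\cap[w]_a\notin N_a^w(X)$; then (sc) with $Y=\emptyset$ and $Z=X\cap[w]_a$ (noting that $\emptyset\subsetneq Z$ because $Z$ is nonempty, and that $X\cap[w]_a-\emptyset=X\cap[w]_a\notin N_a^w(X)$ by assumption) concludes $X\cap[w]_a\in N_a^w(X)$. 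Finally (no-emptyset) splits into two cases: if $X\cap[w]_a=\emptyset$ then ($\emptyset$) already gives $\emptyset\notin N_a^w(X)$; otherwise (n)* puts $X\cap[w]_a\in N_a^w(X)$ and (d) then excludes $\emptyset=X\cap[w]_a-(X\cap[w]_a)$ from the neighbourhood.

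The only real obstacle is spotting that (sc) with the empty witness $Y=\emptyset$ is the right bootstrap: it converts any failure of $X\cap[w]_a$ to belong to the neighbourhood into the opposite conclusion, and so simultaneously drives (n)*, the nonempty case of (no-emptyset), and indirectly (via (d)) the propagation of membership to supersets in (m). Once that observation is made, the whole lemma falls out in a handful of lines without touching (ec).
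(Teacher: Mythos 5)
Your proof is correct and follows essentially the same route as the paper's: everything is derived from (c), (d) and (sc) alone, with the same key device of instantiating (sc) at $Y=\emptyset$ to obtain (n)${}^*$ and the nonempty case of (no-emptyset). The only (cosmetic) difference is in (m), where you argue directly --- (d) supplies the hypothesis $X\cap[w]_{a}-Y\notin N_{a}^{w}(X)$ and (sc) then promotes $Y\subsetneq Z$ --- whereas the paper argues by contradiction via the complements $X\cap[w]_{a}-Z\subsetneq X\cap[w]_{a}-Y$; your version is slightly cleaner, and your reordering of the remaining items (proving ($\emptyset$) before (no-emptyset)) introduces no circularity.
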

\begin{proof}
Let $a\in Ag$, $x\in W$, $X\subseteq W$ and $X'=X\cap[w]_{a}$.

First consider (m). Let $Y\subseteq Z\subseteq X'$ and $Y\in N_{a}^{w}(X)$.
Suppose for contradiction $Z\notin N_{a}^{w}(X)$. Then $Y\neq Z$,
and hence $Y\subsetneq Z$, which implies $X'-Z\subsetneq X'-Y$.
It follows, from (sc), that $X'-Y\in N_{a}^{w}(X)$, contrary to $Y\in N_{a}^{w}(X)$
and (d).

Second consider (no-emptyset). Suppose for contradiction that $\emptyset\in N_{a}^{w}(X)$.
If $X'=\emptyset$, then by (c) $N_{a}^{w}(X)=\{\emptyset\}$; but
by (d), since $\emptyset\in N_{a}^{w}(X)$, 
\[
\emptyset=X'-\emptyset\notin N_{a}^{w}(X).
\]
Contradiction! If $X'\neq\emptyset$, then since $\emptyset\subseteq X'$,
by (m) we have $X'\in N_{a}^{w}(X)$; but because $\emptyset\in N_{a}^{w}(X)$,
by (d) $X'=X'-\emptyset$ should not in $N_{a}^{w}(X)$, contradiction. 

Third for (n){*}. Suppose $X'\neq\emptyset$ and for contradiction
that $X'\notin N_{a}^{w}(X)$. Then $X'-\emptyset\notin N_{a}^{w}(X)$,
and by (sc) 
\[
\emptyset\subsetneq X'\in N_{a}^{w}(X),
\]
a contradiction!

Last for ($\emptyset$). Suppose $X'=\emptyset$. Then by (c), for
all $Y\in N_{a}^{w}(X)$, $Y=\emptyset$. Because of (no-emptyset),
we have $\emptyset\notin N_{a}^{X}(w)$. Therefore $N_{a}^{w}(X)=\emptyset$.
\end{proof}

Note that in Definition \ref{def:NeighborModel} the equivalence relation 
for knowledge is derived from the conditional neighborhood function. Here 
we will show that there is an equivalent definition that takes epistemic 
equivalences as primary. 

\begin{definition}
\label{def:CNMwithEqui}Let $Ag$ be a finite set of agents. A conditional
neighborhood model{*} ${\mathfrak M}$ is a tuple $\left(W,\sim,N,V\right)$
where:
\begin{itemize}
\item $W$ is a non-empty set of worlds;
\item $\sim$ assigns to each $a\in\Ag$ an equivalence relation $\sim_{a}$ on $W$, and we use $\left[w\right]_{a}$
for the $\sim_{a}$ class of $w$;
\item $N$ assigns to each $a\in\Ag$ a function $N_{a}$ that assigns to every world $w\in W$ and set
of worlds $X\subseteq W$ a collection $N_{a}^{w}(X)$ of sets of
worlds\textendash each such set called a neighborhood of $X$\textendash subject
to the following conditions:
\begin{lyxlist}{00.00.0000}
\item [{(c)}] $\forall Y\in N_{a}^{w}(X):Y\subseteq X\cap[w]_{a}$;
\item [{(a)}] $\forall v\in\left[w\right]_{a}:N_{a}^{w}(X)=N_{a}^{v}(X)$;
\item [{(d)}] $\forall Y\in N_{a}^{w}(X)$, $X\cap[w]_{a}-Y\notin N_{a}^{w}(X)$;
\item [{(sc)}] $\forall Y,Z\subseteq X\cap[w]_{a}:$ if $X-Y\notin N_{a}^{w}(X)$
and $Y\subsetneq Z$, then $Z\in N_{a}^{w}(X)$;
\item [{(ec)}] $\forall Y\subseteq W$: if $X\cap[w]_{a}=Y\cap[w]_{a}$,
then $N_{a}^{w}(X)=N_{a}^{w}(Y)$;
\end{lyxlist}
\item $V$ is a valuation.
\end{itemize}
\end{definition}

Note that in this definition we have another condition (a) on neighborhood
functions. This contrasts with Definition \ref{def:NeighborModel}, where 
we already make sure that (a) holds by the way we define $\left[w\right]_{a}$. 

In Definition \ref{def:CNMwithEqui}, however, $\left[w\right]_{a}$ is defined in terms of 
$\sim_{a}$, which is simply an equivalence relation that does not come with a guarantee
for (a). 

The following proposition assures us that the two approaches are equivalent. 

\begin{proposition}
\label{prop:ModelWithEqvi}Let ${\mathfrak M}=\left(W,\sim,N,V\right)$
be a conditional neighborhood model{*}, let $a\in Ag$ and let $R_{a}\subseteq W\times W$
be defined as follows:
\begin{itemize}
\item $\forall w,v\in W,wR_{a}v$ iff $\forall X\subseteq W$, $N_{a}^{w}(X)=N_{a}^{v}(X)$.
\end{itemize}
Then $\sim_{a}=R_{a}$, and $\left(W,N,V\right)$ is a conditional
neighborhood model.
\end{proposition}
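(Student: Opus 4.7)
The plan is to prove the proposition in two halves: first that $\sim_a = R_a$ as binary relations on $W$, and then that the reduct $(W,N,V)$ satisfies the four neighborhood conditions of Definition \ref{def:NeighborModel} once $[w]_a$ is re-read as the $R_a$-equivalence class of $w$.

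For the inclusion $\sim_a \subseteq R_a$, I would appeal directly to the extra condition (a) present in Definition \ref{def:CNMwithEqui}: if $v \in [w]_a$, then $N_a^w(X) = N_a^v(X)$ for every $X \subseteq W$, which is precisely $wR_a v$. This half is immediate and requires no further work.

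The converse $R_a \subseteq \sim_a$ is the main obstacle; this is where the conditions on $N$ have to do genuine work. My approach is to first observe that Lemma \ref{lem:implied-conditions} still applies in the $*$-setting, since its proof only invokes (c), (d), and (sc), all of which are present in Definition \ref{def:CNMwithEqui}. In particular property (n)* holds. Now suppose $wR_a v$. Instantiate with $X = [w]_a$; since $[w]_a \cap [w]_a = [w]_a \neq \emptyset$, (n)* yields $[w]_a \in N_a^w([w]_a)$, and the hypothesis $wR_a v$ then gives $[w]_a \in N_a^v([w]_a)$. Applying (c) in the alternative model, every element of $N_a^v([w]_a)$ is a subset of $[w]_a \cap [v]_a$, so $[w]_a \subseteq [v]_a$. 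The symmetric argument (swap $w$ and $v$, using $X = [v]_a$) gives $[v]_a \subseteq [w]_a$. Hence $[w]_a = [v]_a$, i.e., $w \sim_a v$, completing the equality $\sim_a = R_a$.

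For the final claim that $(W,N,V)$ is a conditional neighborhood model in the sense of Definition \ref{def:NeighborModel}, note that the equivalence class $[w]_a$ as defined there is exactly the $R_a$-class of $w$, which by the previous step coincides with the $\sim_a$-class we were using in the starred model. Consequently the four conditions (c), (ec), (d), (sc) of Definition \ref{def:NeighborModel} reduce to the identically-worded conditions (c), (ec), (d), (sc) already assumed in Definition \ref{def:CNMwithEqui}, and condition (a) of the starred definition becomes redundant precisely because the derived $[w]_a$ automatically satisfies it. Thus $(W,N,V)$ meets every requirement of Definition \ref{def:NeighborModel}, finishing the proof.
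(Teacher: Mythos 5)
Your proof is correct and follows essentially the same route as the paper's: the inclusion $\sim_a\subseteq R_a$ comes straight from condition (a), and the converse hinges on transferring Lemma \ref{lem:implied-conditions} to the starred setting and applying (n){*} to the condition $X=[w]_a$. The only difference is cosmetic: where you argue directly that $[w]_a\in N_a^v([w]_a)$ combined with (c) forces $[w]_a\subseteq[v]_a$ (and symmetrically), the paper argues contrapositively, using ($\emptyset$) to get $N_a^v([w]_a)=\emptyset\neq N_a^w([w]_a)$ when the two cells are disjoint.
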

\begin{proof}
Let $w,v\in W$. Suppose $w\sim_{a}v$. Then by (a), we know that
for each $X\subseteq W$, $N_{a}^{w}(X)=N_{a}^{v}(X)$, which implies
$(w,v)\in R_{a}$.

Suppose it is not the case that $w\sim_{a}v$. Then $[w]_{a}\cap[v]_{a}=\emptyset$.
Similar to the proofs in Lemma \ref{lem:implied-conditions}, we can
prove (n){*} and ($\emptyset$) for ${\mathfrak M}$, and hence we have
$[w]_{a}\in N_{a}^{w}([w]_{a})$, and by ($\emptyset$), $[w]_{a}\notin N_{a}^{v}([w]_{a})$.
It follows that $N_{a}^{w}([w]_{a})\neq N_{a}^{v}([w]_{a})$, which
implies $(w,v)\notin R_{a}$.

Therefore $[w]_{a}=\{v\in W\mid\forall X\subseteq W,N_{a}^{w}(X)=N_{a}^{v}(X)\}$.
We can check that $\left(W,N,V\right)$ satisfies all the conditions
in Definition \ref{def:NeighborModel}, which implies $\left(W,N,V\right)$
is a conditional neighborhood model.
\end{proof}

\section{Completeness of CN} 
\label{appendix:CompCNproof}

In this section we prove Theorem \ref{thm:CompCN}. As a first step in the creation of a canonical model, we define formula closures. 

\begin{definition}
Let $\alpha\in\mathcal{L}_{CN}$ be any $\mathit{CN}$-consistent
formula, i.e., $\nvdash_{\mathit{CN}}\neg\alpha$. The basic closure
of $\alpha$, denoted as $\Phi(\alpha)$, is the smallest set of formula
$\Gamma$ such that:
\begin{itemize}
\item if $\phi$ is a sub-formula of $\alpha\wedge\top$, then $\phi\in\Gamma$;
\item if $\phi\in\Gamma$ and $\phi$ is not a negation, then $\neg\phi\in\Gamma$;
\item if $\phi,\psi\in\Gamma$, then $\phi\wedge\psi\in\Gamma$.
\end{itemize}
Let $\Phi^{B}(\alpha)$ be the smallest extension of $\Phi(\alpha)$
such that
\begin{itemize}
\item if $\phi,\psi\in\Phi(\alpha)$, then $B_{a}(\phi,\psi)\in\Phi^{B}(\alpha)$
for each $a\in\Ag$;
\item if $\phi\in\Phi^{B}(\alpha)$ and $\phi$ is not a negation, then $\neg\phi\in\Phi^{B}(\alpha)$;
\item if $\phi,\psi\in\Phi^{B}(\alpha)$, then $\phi\wedge\psi\in\Phi^{B}(\alpha)$.
\end{itemize}
\end{definition}
Now we define the canonical model of $\alpha$. Note that we use
maximal consistent subsets of $\Phi^{B}(\alpha)$ instead of maximal
consistent sets because we want to make sure our model is in some
sense differentiable.  Furthermore, for each maximal consistent subset
of $\Phi^{B}(\alpha)$, we duplicate it $\Omega$ number of times, i.e.,
each possible world is a maximal consistent subset $\mathbf{w}$ of
$\Phi^{B}(\alpha)$ indexed by a number $i\in\Omega$, namely
$\mathbf{w}_i$. In this way, we can define an equivalence relation
$\sim_{a}$ with the right properties in our canonical model.
\begin{definition}
\label{def:canonicalModel}A canonical conditional neighborhood
model $\mathfrak{M}_{\alpha}$ of $\alpha$ is a tuple $(W,\sim,N,V)$
where:
\begin{itemize}
\item $W=\{\mathbf{w}\subseteq\Phi^{B}(\alpha)\mid\text{\ensuremath{\mathbf{w}}}\text{ is a maximal consistent subset of }\Phi^{B}(\alpha)\}\times\Omega$.
\item for each $a\in\Ag$, $\sim_{a}$ is an equivalence relation on $W$
such that $\forall\mathbf{w}_{i},\mathbf{v}_{j}\in W$:
\begin{enumerate}
\item \label{enu:SameBeliefSim}if $\mathbf{w}_{i}\sim_{a}\mathbf{v}_{j}$,
then $\forall\phi,\psi\in\Phi(\alpha),\mathbf{w}\vdash_{CN} B_{a}(\phi,\psi)\text{ iff }\mathbf{v}\vdash_{CN} B_{a}(\phi,\psi)$;
\item \label{enu:distinctSim}if $\mathbf{w}_{i}\sim_{a}\mathbf{v}_{j}$
and $\mathbf{w}\cap\Phi(\alpha)=\mathbf{v}\cap\Phi(\alpha)$, then
$\mathbf{w}_{i}=\mathbf{v}_{j}$;
\item \label{enu:existSim}if $\forall\phi,\psi\in\Phi(\alpha),\mathbf{w}\vdash_{CN} B_{a}(\phi,\psi)\text{ iff }\mathbf{v}\vdash_{CN} B_{a}(\phi,\psi)$,
then there is a $\mathbf{u}_{l}\in W$ such that $\mathbf{v}\cap\Phi(\alpha)=\mathbf{u}\cap\Phi(\alpha)$
and $\mathbf{w}_{i}\sim_{a}\mathbf{u}_{l}$.
\end{enumerate}
\item $N_{a}^{\mathbf{w}}(X):=\{\{\mathbf{v}_{j}\in X\cap\left[\mathbf{w}_{i}\right]_{a}\mid\mathbf{v}\vdash_{CN} \psi\}\mid\phi\in\Phi(\alpha),\mathbf{w}_{i}\vdash_{CN} B_{a}(\left\Vert X\right\Vert _{a}^{\mathbf{w}_{i}},\psi)\}$,
where $\left[\mathbf{w}_{i}\right]_{a}=\{\mathbf{v}_{j}\in W\mid\mathbf{w}_{i}\sim_{a}\mathbf{v}_{j}\}$
and $\left\Vert X\right\Vert _{a}^{\mathbf{w}_{i}}$ is the characteristic
formula for $X$ w.r.t. $\left[\mathbf{w_{i}}\right]_{a}$, i.e.,
$\forall\mathbf{w}_{i}\in\left[\mathbf{w}\right]_{a}$, $\left\Vert X\right\Vert _{a}^{\mathbf{w}_{i}}\in\mathbf{w}$
iff $\mathbf{w}_{i}\in X$.
\item $\mathbf{w}_{i}\in V(p)$ iff $p\in\mathbf{w}$.
\end{itemize}
\end{definition}
%
Note that canonical conditional neighborhood models have equivalence relations $\sim_{a}$, 
unlike conditional neighborhood models. 
However, this is not a problem, because by Proposition \ref{prop:ModelWithEqvi},
conditional neighborhood models with and without $\sim_{a}$ relations are 
essentially the same.

Also note that $\Phi(\alpha)$ is finite up to logical equivalence, and
because $\Ag$ is finite, $\Phi^{B}(\alpha)$ is finite up to logical
equivalence as well. It follows that by Condition (\ref{enu:distinctSim})
for $\sim_{a}$ each $\left[\mathbf{w}_{i}\right]_{a}$ is finite.

Because Condition (\ref{enu:SameBeliefSim}) for $\sim_{a}$,
$\left[\mathbf{w}_{i}\right]_{a}$ is differentiable w.r.t. $\Phi(\alpha)$
in the sense that for each subset $X\subseteq\left[\mathbf{w}_{i}\right]_{a}$,
there is a characteristic formula $\phi\in\Phi(\alpha)$ such that
$\forall\mathbf{v}_{j}\in\left[\mathbf{w}_{i}\right]_{a}$, $\phi\in\mathbf{v}_{j}$
iff $\mathbf{v}_{j}\in X$, and we use $\left\Vert X\right\Vert _{a}^{\mathbf{v}_{j}}$
for such characteristic formula.
\begin{lemma}
A canonical conditional neighborhood model $\mathfrak{M}_{\alpha}$
of $\alpha$ exists, given that $\alpha$ is $\mathit{CN}$-consistent.
\end{lemma}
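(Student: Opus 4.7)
Plan: The existence claim reduces to constructing, for each agent $a$, an equivalence relation $\sim_a$ on $W=\mathrm{MCS}(\Phi^B(\alpha))\times\Omega$ satisfying conditions (1)--(3) of Definition~\ref{def:canonicalModel}; once $\sim_a$ is in hand, $V$ and $N$ are given by the explicit formulas of the definition. Non-emptiness of $W$ follows from a routine Lindenbaum-style extension applied to the finite (up to propositional equivalence) set $\Phi^B(\alpha)$: since $\alpha\in\Phi^B(\alpha)$ is $CN$-consistent, it extends to a maximal $CN$-consistent subset $\mathbf{w}_0\subseteq\Phi^B(\alpha)$, so $\mathrm{MCS}(\Phi^B(\alpha))\neq\emptyset$.

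To build $\sim_a$, I would fix $\Omega$ countably infinite and, for each agent $a$, first introduce the coarser \emph{belief-type} relation $\equiv_a$ on $\mathrm{MCS}(\Phi^B(\alpha))$ by $\mathbf{w}\equiv_a\mathbf{v}$ iff $\mathbf{w}$ and $\mathbf{v}$ contain the same formulas $B_a(\phi,\psi)$ with $\phi,\psi\in\Phi(\alpha)$; this equivalence has finitely many classes. Inside each $\equiv_a$-class $C$, let $T_C^a=\{\mathbf{w}\cap\Phi(\alpha):\mathbf{w}\in C\}$ be the $\Phi(\alpha)$-types realized in $C$, and for each $t\in T_C^a$ put $C_t^a=\{\mathbf{w}\in C:\mathbf{w}\cap\Phi(\alpha)=t\}$. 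Since each $C_t^a$ is finite and $\Omega$ is countably infinite, fix bijections $\iota_{C,t}^a:C_t^a\times\Omega\to\Omega$ and declare $(\mathbf{w},i)\sim_a(\mathbf{v},j)$ iff $\mathbf{w}\equiv_a\mathbf{v}$ (both in the same class $C$) and $\iota_{C,\mathbf{w}\cap\Phi(\alpha)}^a(\mathbf{w},i)=\iota_{C,\mathbf{v}\cap\Phi(\alpha)}^a(\mathbf{v},j)$. By construction $\sim_a$ is an equivalence relation whose classes consist of exactly one element of $C_t^a\times\Omega$ for each type $t\in T_C^a$.

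The three conditions on $\sim_a$ now follow immediately: (1) because $\sim_a$ refines $\equiv_a$; (2) because two $\sim_a$-related worlds of the same type are mapped to the same $\Omega$-slot by the \emph{same} bijection, hence must coincide; (3) because applying $(\iota_{C,t}^a)^{-1}$ with $t=\mathbf{v}\cap\Phi(\alpha)$ to the slot of $\mathbf{w}_i$ returns a representative $\mathbf{u}_l$ of type $t$ with $\mathbf{u}_l\sim_a\mathbf{w}_i$. With $\sim_a$ in place, the characteristic formula $\|X\|_a^{\mathbf{w}_i}$ of any $X\subseteq[\mathbf{w}_i]_a$ exists in $\Phi(\alpha)$ (up to equivalence) by condition (2) together with the Boolean closure of $\Phi(\alpha)$, and the value of $N_a^{\mathbf{w}}(X)$ does not depend on the choice of representative of the $\sim_a$-class thanks to condition (1).

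The main obstacle is the tension between conditions (2) and (3): each $\sim_a$-class must contain \emph{at most} one world per $\Phi(\alpha)$-type while simultaneously containing \emph{at least} one representative for every type realized in the surrounding $\equiv_a$-class. The duplication device of the index set $\Omega$ is what resolves this, since making $\Omega$ countably infinite forces $|C_t^a\times\Omega|=|\Omega|$ for every non-empty $C_t^a$, so a single uniform family of bijections $\iota_{C,t}^a$ provides a consistent slot-labeling across all types and produces a clean partition of $C\times\Omega$ into blocks of exactly the required shape.
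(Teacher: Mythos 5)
Your construction is correct and is essentially the paper's own proof: your bijections $\iota^a_{C,t}:C^a_t\times\Omega\to\Omega$ are exactly the paper's enumerations of the sets $\left\llbracket\mathbf{w}\right\rrbracket\times\Omega$ (where $\left\llbracket\mathbf{w}\right\rrbracket$ is the class of maximal consistent sets sharing both the $\Phi(\alpha)$-type and the $B_a$-beliefs of $\mathbf{w}$), and declaring two indexed worlds $\sim_a$-related when they share beliefs and occupy the same slot is precisely the paper's definition. The extra remarks on non-emptiness of $W$ and on why the duplication by $\Omega$ reconciles conditions (2) and (3) are accurate but not a departure from the paper's argument.
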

\begin{proof}
We only need to prove $\sim_{a}$ exists for each $a\in Ag$. Let $a\in Ag$,
let $MCS$ be the set of maximal consistent subsets of $\Phi^{B}(\alpha)$,
let $W=MCS\times\Omega$ and let ${\approx}\subseteq MCS\times MCS$
be the relation such that for all $\mathbf{w},\mathbf{v}\in MCS$,
$\mathbf{w}\approx\mathbf{v}$ iff:
\begin{itemize}
\item $\mathbf{w}\cap\Phi(\alpha)=\mathbf{v}\cap\Phi(\alpha)$,
\item $\forall\phi,\psi\in\Phi(\alpha),\mathbf{w}\vdash_{CN} B_{a}(\phi,\psi)\text{ iff }\mathbf{v}\vdash_{CN} B_{a}(\phi,\psi)$.
\end{itemize}
It is easy to check that $\approx$ is an equivalence relation, and
because $MCS$ is finite, $\left\llbracket \mathbf{w}\right\rrbracket =\{\mathbf{v}\mid\mathbf{w}\approx\mathbf{v}\}$
is also finite. It follows that $\left\llbracket \mathbf{w}\right\rrbracket \times\Omega$
is enumerable. Notice that $\{\left\llbracket \mathbf{v}\right\rrbracket \times\Omega\mid\mathbf{v}\in MCS\}$
is a partition of $W$. For each $\left\llbracket \mathbf{w}\right\rrbracket \times\Omega\in\{\left\llbracket \mathbf{v}\right\rrbracket \times\Omega\mid\mathbf{v}\in MCS\}$,
let $w_{0},w_{1},\ldots$ be an enumeration of $\left\llbracket \mathbf{w}\right\rrbracket \times\Omega$.

Now we define $\sim_{a}$. Let ${\sim_{a}}\subseteq W\times W$ be the
relation such that for all $\mathbf{w}_{i},\mathbf{v}_{j}\in W$,
$\mathbf{w}_{i}\sim_{a}\mathbf{v}_{j}$ iff
\begin{itemize}
\item $\forall\phi,\psi\in\Phi(\alpha),\mathbf{w}\vdash_{CN} B_{a}(\phi,\psi)\text{ iff }\mathbf{v}\vdash_{CN} B_{a}(\phi,\psi)$,
\item $\mathbf{w}_{i}$ and $\mathbf{v}_{j}$ are both the $n$-th element
in enumerations of $\left\llbracket \mathbf{w}\right\rrbracket \times\Omega$
and $\left\llbracket \mathbf{v}\right\rrbracket \times\Omega$ respectively,
i.e., $\mathbf{w}_{i}=w_{n}$ and $\mathbf{v}_{j}=v_{n}$ for some
$n\in\Omega$.
\end{itemize}
To check that such $\sim_{a}$ is a desired equivalence relation,
first it is easy to verify that it satisfies Condition (\ref{enu:SameBeliefSim})
in Definition \ref{def:canonicalModel}. 

Now consider Condition (\ref{enu:distinctSim}) in Definition \ref{def:canonicalModel}.
Suppose $\mathbf{w}_{i}\sim_{a}\mathbf{v}_{j}$ and $\mathbf{w}\cap\Phi(\alpha)=\mathbf{v}\cap\Phi(\alpha)$.
Then we have $\left\llbracket \mathbf{w}\right\rrbracket =\left\llbracket \mathbf{v}\right\rrbracket $,
which implies $\left\llbracket \mathbf{w}\right\rrbracket \times\Omega=\left\llbracket \mathbf{v}\right\rrbracket \times\Omega$.
Furthermore, since $\mathbf{w}_{i}$ and $\mathbf{v}_{j}$ are both
the $n$-th element in enumerations of $\left\llbracket \mathbf{w}\right\rrbracket \times\Omega$
and $\left\llbracket \mathbf{v}\right\rrbracket \times\Omega$ respectively
for some $n\in\Omega$, we obtain $\mathbf{w}_{i}=\mathbf{v}_{j}$.

Lastly consider Condition (\ref{enu:distinctSim}) in Definition \ref{def:canonicalModel}.
Let $\mathbf{w}_{i},\mathbf{v}_{j}\in W$ such that 
\[
\forall\phi,\psi\in\Phi(\alpha),\mathbf{w}\vdash_{CN} B_{a}(\phi,\psi)\text{ iff }\mathbf{v}\vdash_{CN} B_{a}(\phi,\psi).
\]
Suppose $\mathbf{w}_{i}$ is the $n$-th element in the enumeration
of $\left\llbracket \mathbf{w}\right\rrbracket \times\Omega$. Then
there is a $\mathbf{u}_{l}\in\left\llbracket \mathbf{v}\right\rrbracket \times\Omega$
such that $\mathbf{u}_{l}$ is the $n$-th element in the enumeration
of $\left\llbracket \mathbf{v}\right\rrbracket \times\Omega$. By
the definition of $\approx$ we have $\mathbf{v}\cap\Phi(\alpha)=\mathbf{u}\cap\Phi(\alpha)$
and 
\[
\forall\phi,\psi\in\Phi(\alpha),\mathbf{v}\vdash_{CN} B_{a}(\phi,\psi)\text{ iff }\mathbf{u}\vdash_{CN} B_{a}(\phi,\psi),
\]
the latter of which implies 
\[
\forall\phi,\psi\in\Phi(\alpha),\mathbf{w}\vdash_{CN} B_{a}(\phi,\psi)\text{ iff }\mathbf{u}\vdash_{CN} B_{a}(\phi,\psi).
\]
Therefore by the definition of $\sim_{a}$ we have $\mathbf{w}_{i}\sim_{a}\mathbf{u}_{l}$
and $\mathbf{v}\cap\Phi(\alpha)=\mathbf{u}\cap\Phi(\alpha)$.
\end{proof}

\begin{lemma}
\label{lem:ConditionEquivalence}Let $\mathfrak{M}_{\alpha}=(W,\sim,N,V)$
be a canonical conditional neighborhood model of a $\mathit{CN}$-consistent
formula $\alpha$, let $a\in Ag$, $\mathbf{w}\in W$, and let $\phi\in\Phi(\alpha)$
such that $\mathbf{v}\vdash_{CN} \phi$ for all $\mathbf{v}\in[\mathbf{w}]_{a}$.
Then $K_{a}\phi\in\mathbf{w}$.
\end{lemma}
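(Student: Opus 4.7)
The plan is to prove the contrapositive: assume $K_a\phi\notin\mathbf{w}$, and construct some $\mathbf{u}_l\in[\mathbf{w}_i]_a$ with $\phi\notin\mathbf{u}$, contradicting the hypothesis. The key observation is that $K_a\phi$ is shorthand for $\neg B_a(\neg\phi,\top)$, and the introspection axioms (4B) and (5B) say that the $\sim_a$-class of $\mathbf{w}$ is determined entirely by which (negated) $B_a$-sentences hold at $\mathbf{w}$. So the witness $\mathbf{u}$ should be found by first building a maximal consistent subset of $\Phi^B(\alpha)$ that agrees with $\mathbf{w}$ on all belief sentences but falsifies $\phi$, and then using Condition (\ref{enu:existSim}) of Definition~\ref{def:canonicalModel} to realise it as an element of $[\mathbf{w}_i]_a$.

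Let $\Delta_\mathbf{w}$ be the set of formulas of the form $B_a(\psi,\chi)$ or $\neg B_a(\psi,\chi)$ with $\psi,\chi\in\Phi(\alpha)$ that belong to $\mathbf{w}$. First I claim $\Delta_\mathbf{w}\not\vdash_{CN}\phi$. Otherwise $\vdash_{CN}\bigwedge\Delta'\to\phi$ for some finite $\Delta'\subseteq\Delta_\mathbf{w}$. By (5B) and (4B), each conjunct $\gamma$ of $\bigwedge\Delta'$ satisfies $\vdash_{CN}\gamma\to K_a\gamma$, so $\vdash_{CN}\bigwedge\Delta'\to K_a\bigwedge\Delta'$ by the usual conjunctive closure of $K_a$ (derivable from Dist-K and Nec-K). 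Applying Nec-K and Dist-K to $\bigwedge\Delta'\to\phi$ gives $\vdash_{CN}K_a\bigwedge\Delta'\to K_a\phi$, and chaining yields $\vdash_{CN}\bigwedge\Delta'\to K_a\phi$. Since each conjunct of $\bigwedge\Delta'$ lies in $\mathbf{w}$, so would $K_a\phi$, contradicting our assumption.

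Consequently $\Delta_\mathbf{w}\cup\{\neg\phi\}$ is consistent, so a standard Lindenbaum-style enumeration inside $\Phi^B(\alpha)$ produces a maximal consistent subset $\mathbf{v}^*\subseteq\Phi^B(\alpha)$ containing $\Delta_\mathbf{w}$ with $\phi\notin\mathbf{v}^*$. By construction $\mathbf{v}^*$ agrees with $\mathbf{w}$ on every (negated) $B_a$-formula of $\Phi^B(\alpha)$, hence on whether $B_a(\psi,\chi)$ is derived for all $\psi,\chi\in\Phi(\alpha)$. Applying Condition~(\ref{enu:existSim}) of Definition~\ref{def:canonicalModel} to $\mathbf{w}_i$ and any indexed copy $\mathbf{v}^*_j\in W$ produces $\mathbf{u}_l\in W$ with $\mathbf{w}_i\sim_a\mathbf{u}_l$ and $\mathbf{u}\cap\Phi(\alpha)=\mathbf{v}^*\cap\Phi(\alpha)$. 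Since $\phi\in\Phi(\alpha)$ and $\phi\notin\mathbf{v}^*$, we conclude $\phi\notin\mathbf{u}$, so $\mathbf{u}_l\in[\mathbf{w}_i]_a$ falsifies $\phi$, contradicting the hypothesis.

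The main obstacle is the introspection derivation that $\vdash_{CN}\bigwedge\Delta'\to K_a\phi$: pulling $K_a$ across a conjunction of mixed positive and negated belief sentences is precisely where (4B) and (5B) must be combined. A minor subtlety is the case where $\phi$ is itself a negation $\neg\psi$, so that $B_a(\neg\phi,\top)$ need not literally lie in $\Phi^B(\alpha)$; this is absorbed by identifying $K_a\neg\psi$ with its provable equivalent $\neg B_a(\psi,\top)$ via (EC) and standard closure bookkeeping.
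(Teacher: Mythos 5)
Your proposal is correct and follows essentially the same route as the paper's own proof: assume $\neg K_a\phi\in\mathbf{w}$, show that the set of (negated) belief formulas of $\mathbf{w}$ together with $\neg\phi$ is consistent via (Nec-K), (Dist-K), (4B) and (5B), and then invoke Condition~(\ref{enu:existSim}) of Definition~\ref{def:canonicalModel} to produce a world in $[\mathbf{w}]_a$ refuting $\phi$. The only difference is cosmetic (you push $K_a$ onto the conjunction $\bigwedge\Delta'$ before distributing over the implication, whereas the paper distributes first), so the two arguments coincide.
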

\begin{proof}
Because $\phi\in\Phi(\alpha)$, we know that either $K_{a}\phi\in\mathbf{w}$
or $\neg K_{a}\phi\in\mathbf{w}$. Suppose for contradiction that
$\neg K_{a}\phi\in\mathbf{w}$. Then $\mathbf{w}\vdash_{CN} \check{K}_{a}\neg\phi$.
Let 
\begin{itemize}
\item $\mathbf{v}^{-}=\{\neg\phi\}\cup\{B_{a}(\psi,\chi)\in\mathbf{w}\mid\psi,\chi\in\Phi(\alpha)\}\cup\{\neg B_{a}(\psi,\chi)\in\mathbf{w}\mid\psi,\chi\in\Phi(\alpha)\}$.
\end{itemize}
$\mathbf{v}^{-}$ should be consistent, for otherwise there are 
\[
B_{a}(\alpha_{1},\beta_{1}),\ldots,B_{a}(\alpha_{k},\beta_{k}),\neg B_{a}(\gamma_{1},\delta_{1}),\ldots,\neg B_{a}(\gamma_{l},\delta_{l})\in\mathbf{w}
\]
 such that
\[
\vdash_{CN} \bigwedge_{i=1}^{k}B_{a}(\alpha_{i},\beta_{i})\wedge\bigwedge_{i=1}^{l}\neg B_{a}(\gamma_{i},\delta_{i})\to\phi,
\]
which implies, by (Nec-K) and (Dist-K), 
\[
\vdash_{CN} \bigwedge_{i=1}^{k}K_{a}B_{a}(\alpha_{i},\beta_{i})\wedge\bigwedge_{i=1}^{l}K_{a}\neg B_{a}(\gamma_{i},\delta_{i})\to K_{a}\phi,
\]
and then using (5B) and (4B) we have $\mathbf{w}\vdash_{CN} K_{a}\phi$,
contrary to $\mathbf{w}\vdash_{CN} \check{K}_{a}\neg\phi$. It follows that
$\mathbf{v}^{-}$ is consistent, and then by Condition (\ref{enu:existSim})
in Definition \ref{def:canonicalModel}, there is a $\mathbf{v}\in\left[\mathbf{w}\right]_{a}$
such that $\neg\phi\in\mathbf{v}$, contrary to our assumption that
for all $\mathbf{v}\in[\mathbf{w}]_{a}$, $\mathbf{v}\vdash_{CN} \phi$. 

Therefore $\mathbf{w}\vdash_{CN} K_{a}\phi$, i.e., $K_{a}\phi\in\mathbf{w}$.
\end{proof}
\begin{theorem}
Every $\mathit{CN}$-consistent formula $\alpha$ has a conditional
neighborhood model $\mathfrak{M}_{\alpha}$.
\end{theorem}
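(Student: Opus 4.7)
The plan is to combine the construction already given (the canonical model $\mathfrak{M}_\alpha$ and the existence of its equivalence relations) with a standard truth lemma: for every $\phi \in \Phi^B(\alpha)$ and every $\mathbf{w}_i \in W$, $\mathfrak{M}_\alpha, \mathbf{w}_i \vDash \phi$ iff $\phi \in \mathbf{w}$. Since $\alpha$ is CN-consistent and $\Phi^B(\alpha)$ is finite up to logical equivalence, a Lindenbaum-style extension inside $\Phi^B(\alpha)$ produces a maximal consistent $\mathbf{w}$ with $\alpha \in \mathbf{w}$, so any copy $\mathbf{w}_i$ then satisfies $\alpha$. Before the truth lemma, however, I must confirm that $\mathfrak{M}_\alpha$ really is a conditional neighborhood model.

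For the model conditions, I would verify clauses (c), (ec), (d), (sc) of Definition \ref{def:CNMwithEqui} and then invoke Proposition \ref{prop:ModelWithEqvi}. Condition (c) holds by construction; (a) is guaranteed by clause (\ref{enu:SameBeliefSim}) in the definition of $\sim_a$. For (ec), if $X \cap [\mathbf{w}_i]_a = Y \cap [\mathbf{w}_i]_a$ then the characteristic formulas $\|X\|_a^{\mathbf{w}_i}$ and $\|Y\|_a^{\mathbf{w}_i}$ agree throughout the cell, so Lemma \ref{lem:ConditionEquivalence} places $K_a(\|X\|_a^{\mathbf{w}_i} \leftrightarrow \|Y\|_a^{\mathbf{w}_i})$ in $\mathbf{w}$, and axiom (EC) then forces $N_a^{\mathbf{w}_i}(X) = N_a^{\mathbf{w}_i}(Y)$. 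Clauses (d) and (sc) follow by unpacking the definition of $N$ and invoking axioms (D) and (SC) respectively, again with the help of Lemma \ref{lem:ConditionEquivalence} to translate ``complement inside $X \cap [\mathbf{w}_i]_a$'' into the syntactic form $\|X\|_a^{\mathbf{w}_i} \wedge \neg\psi$.

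The heart of the argument is the truth lemma, by induction on the construction of formulas in $\Phi^B(\alpha)$. Boolean and propositional cases are routine from maximal consistency. The interesting case is $B_a(\phi, \psi)$ with $\phi, \psi \in \Phi(\alpha)$. Using the induction hypothesis on $\phi$, the set $\llbracket \phi \rrbracket_{\mathfrak{M}_\alpha} \cap [\mathbf{w}_i]_a$ equals $\{\mathbf{v}_j \in [\mathbf{w}_i]_a : \phi \in \mathbf{v}\}$, so $\|\llbracket \phi \rrbracket\|_a^{\mathbf{w}_i}$ and $\phi$ agree on the cell; Lemma \ref{lem:ConditionEquivalence} then yields $K_a(\|\llbracket \phi \rrbracket\|_a^{\mathbf{w}_i} \leftrightarrow \phi) \in \mathbf{w}$, and (EC) lets us swap them inside $B_a$. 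For the forward direction, if some $Y \in N_a^{\mathbf{w}_i}(\llbracket \phi \rrbracket)$ satisfies $Y \subseteq \llbracket \psi \rrbracket$, then the definition of $N$ supplies $\chi \in \Phi(\alpha)$ with $B_a(\|\llbracket\phi\rrbracket\|_a^{\mathbf{w}_i}, \chi) \in \mathbf{w}$; the induction hypothesis on $\psi$ shows $\chi$ entails $\psi$ throughout the cell, so axiom (M) combined with (EC) gives $B_a(\phi, \psi) \in \mathbf{w}$. Conversely, if $B_a(\phi, \psi) \in \mathbf{w}$, then (EC) transfers this to $B_a(\|\llbracket\phi\rrbracket\|_a^{\mathbf{w}_i}, \psi) \in \mathbf{w}$, and the set $Y = \{\mathbf{v}_j \in \llbracket\phi\rrbracket \cap [\mathbf{w}_i]_a : \psi \in \mathbf{v}\}$ then lies in $N_a^{\mathbf{w}_i}(\llbracket\phi\rrbracket)$ by construction; the induction hypothesis on $\psi$ guarantees $Y \subseteq \llbracket\psi\rrbracket$.

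I expect the main obstacle to be the bookkeeping between semantic sets of worlds and syntactic characteristic formulas. The canonical model uses maximal consistent subsets of the \emph{finite} closure $\Phi^B(\alpha)$ together with the duplication by $\Omega$ precisely so that every subset of a cell $[\mathbf{w}_i]_a$ is definable by a formula of $\Phi(\alpha)$ and every ``epistemically possible'' MCS is realized inside the cell. Threading Lemma \ref{lem:ConditionEquivalence} at every step to pass from ``agreement on the cell'' to a $K_a$-fact usable by axioms (EC) and (M) is what keeps the syntactic and semantic sides aligned; once this is in place, the truth lemma and hence the theorem follow.
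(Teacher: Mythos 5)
Your proposal is correct and follows essentially the same route as the paper: build the canonical model from maximal consistent subsets of the finite closure $\Phi^{B}(\alpha)$ duplicated by $\Omega$, verify conditions (c), (a), (d), (sc), (ec) using axioms (D), (SC), (M), (EC) with Lemma \ref{lem:ConditionEquivalence} bridging cell-wide agreement to $K_a$-facts, and then invoke Proposition \ref{prop:ModelWithEqvi}. The only organizational difference is that you fold the Truth Lemma and the Lindenbaum step into this theorem, whereas the paper proves the Truth Lemma as a separate subsequent lemma; the substance is the same.
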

\begin{proof}
Suppose $\nvdash\neg\alpha$. Let $\mathfrak{M}_{\alpha}=(W,\sim,N,V)$
be a canonical conditional neighborhood model of $\alpha$, and
for all $a\in\Ag$, $\mathbf{w}\in W$ and $X\subseteq W$, let $\left\Vert X\right\Vert _{a}^{w}$
be the characteristic formula for $X$ w.r.t. $\left[\mathbf{w}\right]_{a}$,
i.e., $\forall\mathbf{v}\in\left[\mathbf{w}\right]_{a}$, $\left\Vert X\right\Vert _{a}^{\mathbf{w}}\in\mathbf{v}$
iff $\mathbf{v}\in X$. It suffice to show that $(W,\sim,N,V)$ is
a conditional neighborhood model{*} (see Definition \ref{def:CNMwithEqui}), and then by Proposition \ref{prop:ModelWithEqvi}
we can obtain that $(W,N,V)$ is a conditional neighborhood
model.

Clearly $\sim_{a}$ are equivalence relations. It follows that for
each $\mathbf{v}\in\left[\mathbf{w}\right]_{a}$, $\left[\mathbf{v}\right]_{a}=\left[\mathbf{w}\right]_{a}$,
which implies
\begin{equation}
\forall\mathbf{w},\mathbf{v}\in W\forall X\subseteq W,\mathbf{w}\sim_{a}\mathbf{v}\text{ only if }\left\Vert X\right\Vert _{a}^{\mathbf{w}}=\left\Vert X\right\Vert _{a}^{\mathbf{v}}.\label{eq:CharaFormEq}
\end{equation}
 It remains to show that $N$ satisfies (c), (a), (d), (sc) and (ec).
Let $a\in\Ag$, $\mathbf{w}\in W$ and $X\subseteq W$.

First we consider (c), but it is straightforward by the definition
of $N_{a}^{\mathbf{w}}(X)$.

Second for (a). Consider all $\mathbf{u}\in[\mathbf{w}]_{a}$, $Y,Z\subseteq W$.
$Z\in N_{a}^{\mathbf{w}}(Y)$ iff 
\[
\exists\psi\in\Phi(\alpha),\mathbf{w}\vdash_{CN} B_{a}(\left\Vert Y\right\Vert _{a}^{\mathbf{w}},\psi)\text{ and }Z=\{\mathbf{v}\in Y\cap\left[\mathbf{w}\right]_{a}\mid\mathbf{v}\vdash_{CN} \psi\}
\]
iff by $\left\Vert X\right\Vert _{a}^{\mathbf{w}},\psi\in\Phi(\alpha)$,
Definition \ref{def:canonicalModel}(\ref{enu:SameBeliefSim}) and
(\ref{eq:CharaFormEq})
\[
\exists\psi\in\Phi(\alpha),\mathbf{u}\vdash_{CN} B_{a}(\left\Vert Y\right\Vert _{a}^{\mathbf{u}},\psi)\text{ and }Z=\{\mathbf{v}\in Y\cap\left[\mathbf{u}\right]_{a}\mid\mathbf{v}\vdash_{CN} \psi\}
\]
iff $Z\in N_{a}^{\mathbf{u}}(Y)$.

Third for (d), where we use axioms (D), (N) and (M). Consider any
$Y\in N_{a}^{\mathbf{w}}(X)$. By the definition of $N_{a}^{\mathbf{w}}(X)$,
we have that there is a $\phi\in\Phi(\alpha)$ such that $Y=\{\mathbf{v}\in X\cap\left[\mathbf{w}\right]_{a}\mid\mathbf{v}\vdash_{CN} \phi\}$
and $\mathbf{w}\vdash_{CN} B_{a}(\left\Vert X\right\Vert _{a}^{\mathbf{w}},\phi)$.
Using (D) we can derive that 
\begin{equation}
\mathbf{w}\vdash_{CN} \neg B_{a}(\left\Vert X\right\Vert _{a}^{\mathbf{w}},\neg\phi).\label{eq:connonical(d)}
\end{equation}
Now suppose by contradiction that there is a $\psi\in\Phi$ such that
$X-Y=\{\mathbf{v}\in X\mid\mathbf{v}\vdash_{CN} \psi\}$ and $\mathbf{w}\vdash_{CN} B_{a}(\left\Vert X\right\Vert _{a}^{\mathbf{w}},\psi)$.
Then by (N) we have $\mathbf{w}\vdash_{CN} B_{a}(\left\Vert X\right\Vert _{a}^{\mathbf{w}},\left\Vert X\right\Vert _{a}^{\mathbf{w}}\wedge\psi)$.
Note that for each $\mathbf{v}\in[\mathbf{w}]_{a}$, $\mathbf{v}\vdash_{CN} \left\Vert X\right\Vert _{a}^{\mathbf{w}}\wedge\psi$
only if $\mathbf{v}\in X-Y$ only if $\mathbf{v}\vdash_{CN} \neg\phi$,
i.e., $\mathbf{v}\vdash_{CN} \left\Vert X\right\Vert _{a}^{\mathbf{w}}\wedge\psi\to\neg\phi$.
Thus by Lemma \ref{lem:ConditionEquivalence} and $\left\Vert X\right\Vert _{a}^{\mathbf{w}},\phi,\psi\in\Phi(\alpha)$
we can get that $K_{a}(\left\Vert X\right\Vert _{a}^{\mathbf{w}}\wedge\psi\to\neg\phi)\in\mathbf{w}$,
and then using (M) we obtain $\mathbf{w}\vdash_{CN} B_{a}(\left\Vert X\right\Vert _{a}^{\mathbf{w}},\neg\phi)$,
contrary to (\ref{eq:connonical(d)}).

Then for (sc), we use (T) and (SC). Consider any $Y,Z\subseteq X\cap[\mathbf{w}]_{a}$
such that $X\cap[\mathbf{w}]_{a}-Y\notin N_{a}^{\mathbf{w}}(X)$.
Because $X\cap[\mathbf{w}]_{a}-Y\notin N_{a}^{\mathbf{w}}(X)$, we
have $\mathbf{w}\nvdash B_{a}(\left\Vert X\right\Vert _{a}^{\mathbf{w}},\neg\left\Vert Y\right\Vert _{a}^{\mathbf{w}})$.
Recall that $\left\Vert X\right\Vert _{a}^{\mathbf{w}},\neg\left\Vert Y\right\Vert _{a}^{\mathbf{w}}\in\Phi(\alpha)$,
we know that either $B(\left\Vert X\right\Vert _{a}^{\mathbf{w}},\neg\left\Vert Y\right\Vert _{a}^{\mathbf{w}})\in\mathbf{w}$
or $\neg B(\left\Vert X\right\Vert _{a}^{\mathbf{w}},\neg\left\Vert Y\right\Vert _{a}^{\mathbf{w}})\in\mathbf{w}$.
It follows that 
\begin{equation}
\mathbf{w}\vdash_{CN} \neg B(\left\Vert X\right\Vert _{a}^{\mathbf{w}},\neg\left\Vert Y\right\Vert _{a}^{\mathbf{w}}).\label{eq:cannonical(sc)}
\end{equation}
If $X\cap[\mathbf{w}]_{a}-Y=\emptyset$, then $X\cap[\mathbf{w}]_{a}=Y$,
which implies there is no such $Z$ with $Y\subsetneq Z$; thus (sc)
vacuously holds. Suppose $\mathbf{v}\in X\cap[\mathbf{w}]_{a}-Y$
and $\mathbf{v}\in Z\supsetneq Y$. Then $\mathbf{v}\vdash_{CN} \left\Vert X\right\Vert _{a}^{\mathbf{w}}\wedge\neg\left\Vert Y\right\Vert _{a}^{\mathbf{w}}\wedge\left\Vert Z\right\Vert _{a}^{\mathbf{w}}$.
Using (T) we can obtain that $\mathbf{v}\vdash_{CN} \check{K}_{a}(\left\Vert X\right\Vert _{a}^{\mathbf{w}}\wedge\neg\left\Vert Y\right\Vert _{a}^{\mathbf{w}}\wedge\left\Vert Z\right\Vert _{a}^{\mathbf{w}})$.
By $\mathbf{v}\in[\mathbf{w}]_{a}$ and Condition (\ref{enu:SameBeliefSim})
in Definition \ref{def:canonicalModel}, $\mathbf{w}\vdash_{CN} \check{K}_{a}(\left\Vert X\right\Vert _{a}^{\mathbf{w}}\wedge\neg\left\Vert Y\right\Vert _{a}^{\mathbf{w}}\wedge\left\Vert Z\right\Vert _{a}^{\mathbf{w}})$.
It follows that, using (\ref{eq:cannonical(sc)}) and (SC), $\mathbf{w}\vdash_{CN}  B_{a}(\left\Vert X\right\Vert _{a}^{\mathbf{w}},\left\Vert Y\right\Vert _{a}^{\mathbf{w}}\vee\left\Vert Z\right\Vert _{a}^{\mathbf{w}})$.
Because $Y\subsetneq Z$ and thus $Z=\{\mathbf{v}\in X\cap\left[\mathbf{w}\right]_{a}\mid\mathbf{v}\vdash_{CN} \left\Vert Y\right\Vert _{a}^{\mathbf{w}}\vee\left\Vert Z\right\Vert _{a}^{\mathbf{w}}\}$,
we have $Z\in N_{a}^{\mathbf{w}}(X)$.

Last for (ec). Consider any $Y\subseteq W$ such that $X\cap[\mathbf{w}]_{a}=Y\cap[\mathbf{w}]_{a}$.
Clearly $X\cap[\mathbf{w}]_{a}$ and $Y\cap[\mathbf{w}]_{a}$ have
the same characteristic formula w.r.t. $[\mathbf{w}]_{a}$, i.e.,
$\left\Vert X\right\Vert _{a}^{\mathbf{w}}=\left\Vert Y\right\Vert _{a}^{\mathbf{w}}$.
Let $Z\subseteq W$. $Z\in N_{a}^{\mathbf{w}}(X)$ iff 
\[
\exists\psi\in\Phi(\alpha),\mathbf{w}\vdash_{CN}  B_{a}(\left\Vert X\right\Vert _{a}^{\mathbf{w}},\psi)\text{ and }Z=\{\mathbf{v}\in X\cap\left[\mathbf{w}\right]_{a}\mid\mathbf{v}\vdash_{CN}\psi\}
\]
iff since $\left\Vert X\right\Vert _{a}^{\mathbf{w}}=\left\Vert Y\right\Vert _{a}^{\mathbf{w}}$
and $X\cap[\mathbf{w}]_{a}=Y\cap[\mathbf{w}]_{a}$,
\[
\exists\psi\in\Phi(\alpha),\mathbf{w}\vdash_{CN} B_{a}(\left\Vert Y\right\Vert _{a}^{\mathbf{w}},\psi)\text{ and }Z=\{\mathbf{v}\in Y\cap\left[\mathbf{w}\right]_{a}\mid\mathbf{v}\vdash_{CN} \psi\}
\]
iff $Z\in N_{a}^{\mathbf{w}}(Y)$.
\end{proof}
\begin{lemma}
(Truth Lemma) Let $\alpha\in\mathcal{L}_{CN}$ be a $\mathit{CN}$-consistent
formula and let $\mathfrak{M}_{\alpha}=(W,N,V)$ be a canonical conditional
neighborhood model of $\alpha$ removing equivalence relation $\sim$. Then for all formulas $\phi\in\Phi(\alpha)$
and $\mathbf{w}\in W$, $\mathfrak{M}_{\alpha},\mathbf{w}\vDash_{CN} \phi$
iff $\phi\in\mathbf{w}$.
\end{lemma}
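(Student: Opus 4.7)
The plan is to proceed by induction on the complexity of $\phi \in \Phi(\alpha)$. For $\phi = \top$ or $\phi = p$, the statement is immediate from the definition of $V$ together with the closure properties of $\Phi^B(\alpha)$, which guarantee that every maximal consistent $\mathbf{w}$ decides each such formula. The Boolean cases are standard, since $\Phi^B(\alpha)$ is closed under the relevant negations and conjunctions, and the $K_a$-operator is not primitive but an abbreviation for $\neg B_a(\neg\,\cdot\,,\top)$. The real content lies in the case $\phi = B_a(\psi,\chi)$ with $\psi,\chi \in \Phi(\alpha)$.

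For that case, I first use condition (m) to rewrite the semantic clause as $\mathfrak{M}_\alpha, \mathbf{w} \vDash B_a(\psi,\chi)$ iff
\[
Y \ :=\ \{\mathbf{v} \in \llbracket \psi \rrbracket \cap [\mathbf{w}]_a \mid \mathfrak{M}_\alpha, \mathbf{v} \vDash \chi\}\ \in\ N_a^{\mathbf{w}}(\llbracket \psi \rrbracket).
\]
By the induction hypothesis, for every $\mathbf{v} \in [\mathbf{w}]_a$ we have $\mathbf{v} \vDash \psi$ iff $\psi \in \mathbf{v}$, and similarly for $\chi$. Hence $\psi$ itself acts as a characteristic formula for $\llbracket \psi \rrbracket$ on $[\mathbf{w}]_a$, so $\psi \leftrightarrow \|\llbracket \psi \rrbracket\|_a^{\mathbf{w}}$ is true throughout $[\mathbf{w}]_a$, and Lemma \ref{lem:ConditionEquivalence} lifts this to $K_a(\psi \leftrightarrow \|\llbracket \psi \rrbracket\|_a^{\mathbf{w}}) \in \mathbf{w}$.

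For the direction $\mathfrak{M}_\alpha, \mathbf{w} \vDash B_a(\psi,\chi) \Rightarrow B_a(\psi,\chi) \in \mathbf{w}$, unpacking the definition of $N_a^{\mathbf{w}}$ hands me a witness $\theta \in \Phi(\alpha)$ with $\mathbf{w} \vdash_{CN} B_a(\|\llbracket \psi \rrbracket\|_a^{\mathbf{w}}, \theta)$ and $Y = \{\mathbf{v} \in \llbracket \psi \rrbracket \cap [\mathbf{w}]_a \mid \theta \in \mathbf{v}\}$. Axiom (EC) swaps the characteristic formula for $\psi$, giving $\mathbf{w} \vdash_{CN} B_a(\psi,\theta)$. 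From $Y \subseteq \llbracket \chi \rrbracket$ and the induction hypothesis, $\psi \wedge \theta \to \chi$ holds throughout $[\mathbf{w}]_a$, so Lemma \ref{lem:ConditionEquivalence} delivers $K_a(\psi \wedge \theta \to \chi) \in \mathbf{w}$; axioms (C) and (M) then lift $B_a(\psi,\theta)$ first to $B_a(\psi, \psi \wedge \theta)$ and finally to $B_a(\psi,\chi)$. The converse is more direct: if $B_a(\psi,\chi) \in \mathbf{w}$, then (EC) transports it to $B_a(\|\llbracket \psi \rrbracket\|_a^{\mathbf{w}}, \chi)$, whereupon the very definition of $N_a^{\mathbf{w}}$ places $Y$ in $N_a^{\mathbf{w}}(\llbracket \psi \rrbracket)$, and the induction hypothesis identifies $Y$ with the semantic set required by the rewritten clause.

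The main obstacle is the triangular bookkeeping in the $B_a$ case: I must move fluently between the syntactic witness $\theta$ supplied by the definition of $N_a^{\mathbf{w}}$, the external conditioning formula $\psi$ appearing in $B_a(\psi,\chi)$, and the internal characteristic formula $\|\llbracket \psi \rrbracket\|_a^{\mathbf{w}}$ used to define $N$. All three carve out the same subset of $[\mathbf{w}]_a$, but converting this semantic coincidence into a provable $K_a$-equivalence at $\mathbf{w}$ is exactly what Lemma \ref{lem:ConditionEquivalence} buys us; once that bridge is in place, axioms (EC), (C) and (M) perform the remaining syntactic manipulation without further friction.
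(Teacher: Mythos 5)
Your proposal is correct and follows essentially the same route as the paper's proof: induction on $\phi$, the (m)-based reformulation of the truth clause, Lemma \ref{lem:ConditionEquivalence} to obtain $K_a(\psi\leftrightarrow\left\Vert\llbracket\psi\rrbracket\right\Vert_a^{\mathbf{w}})\in\mathbf{w}$, and axioms (EC) and (M) to shuttle between $\psi$, its characteristic formula, and the syntactic witness supplied by the definition of $N_a^{\mathbf{w}}$. The only (harmless, arguably cleaner) deviation is that in the hard direction you pass through $K_a(\psi\wedge\theta\to\chi)$ and axiom (C), where the paper instead derives $K_a(\chi'\leftrightarrow\chi)$ for the witness $\chi'$ and applies (EC) and (M).
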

\begin{proof}
We prove by induction on $\phi$. The cases of $\top,p$ and the Boolean
combinations are straightforward. For the case of $B_{a}(\psi,\chi)$.
Let $X$ be any set such that $\{\mathbf{v}\in[\mathbf{w}]_{a}\mid\psi\in\mathbf{v}\}\subseteq X$.
Note that because $B_{a}(\psi,\chi)\in\Phi(\alpha)$, we have $\psi,\chi\in\Phi(\alpha)$.

Note that for each $\mathbf{v}\in[\mathbf{w}]_{a}$, $\psi\in\mathbf{v}$
iff $\mathbf{v}\in X$ iff $\left\Vert X\right\Vert _{a}^{\mathbf{w}}\in\mathbf{v}$.
Thus for each $\mathbf{v}\in[\mathbf{w}]_{a}$, $\mathbf{v}\vdash_{CN} \psi\leftrightarrow\left\Vert X\right\Vert _{a}^{\mathbf{w}}$.
It follows, by Lemma \ref{lem:ConditionEquivalence}
\begin{equation}
K_{a}(\phi\leftrightarrow\left\Vert X\right\Vert _{a}^{\mathbf{w}})\in\mathbf{w}.\label{eq:TL(ec)}
\end{equation}

Also note that $\mathfrak{M}_{\alpha},\mathbf{w}\vDash B_{a}(\psi,\chi)$
iff $\{\mathbf{v}\in\left\llbracket \psi\right\rrbracket _{\text{\ensuremath{\mathfrak{M}_{\alpha}}}}\cap[\mathbf{w}]_{a}\mid\mathfrak{M}_{\alpha},\mathbf{v}\vDash\chi\}\in N_{a}^{\mathbf{w}}(\left\llbracket \psi\right\rrbracket _{\mathfrak{M}_{\alpha}})$
iff (induction hypothesis) $X=\left\llbracket \psi\right\rrbracket _{\text{\ensuremath{\mathfrak{M}}}}$
and $\{\mathbf{v}\in X\cap[\mathbf{w}]_{a}\mid\chi\in\mathbf{v}\}\in N_{a}^{\mathbf{w}}(X)$.
Thus 
\begin{equation}
\mathfrak{M}_{\alpha},\mathbf{w}\vDash B_{a}(\psi,\chi)\text{ iff }\{\mathbf{v}\in X\cap[\mathbf{w}]_{a}\mid\chi\in\mathbf{v}\}\in N_{a}^{\mathbf{w}}(X).\label{eq:TLClaim}
\end{equation}

Suppose $B_{a}(\psi,\chi)\in\mathbf{w}$. Then we have $\mathbf{w}\vdash B_{a}(\psi,\chi)$,
which implies by (\ref{eq:TL(ec)}) and (ec) $\mathbf{w}\vdash B_{a}(\left\Vert X\right\Vert _{a}^{\mathbf{w}},\chi)$.
It follows that $\{\mathbf{v}\in X\cap[\mathbf{w}]_{a}\mid\chi\in\mathbf{v}\}\in N_{a}^{\mathbf{w}}(X)$,
and hence, using (\ref{eq:TLClaim}), $\mathfrak{M}_{\alpha},\mathbf{w}\vDash B_{a}(\psi,\chi)$.

Suppose $\mathfrak{M}_{\alpha},\mathbf{w}\vDash B_{a}(\psi,\chi)$,
and hence by (\ref{eq:TLClaim}) $\{\mathbf{v}\in X\cap[\mathbf{w}]_{a}\mid\chi\in\mathbf{v}\}\in N_{a}^{\mathbf{w}}(X)$.
Then there is a $\chi'\in\Phi(\alpha)$ such that $\{\mathbf{v}\in X\cap[\mathbf{w}]_{a}\mid\chi'\in\mathbf{v}\}=\{\mathbf{v}\in X\cap[\mathbf{w}]_{a}\mid\chi\in\mathbf{v}\}$
and $\mathbf{w}\vdash B_{a}(\left\Vert X\right\Vert _{a}^{\mathbf{w}},\chi')$.
It follows, by Lemma \ref{lem:ConditionEquivalence} that $K_{a}(\chi'\leftrightarrow\chi)\in\mathbf{w}$.
Using (\ref{eq:TL(ec)}), (ec) and (M) we have $\mathbf{w}\vdash B_{a}(\psi,\chi)$.
Recall that $B_{a}(\psi,\chi)\in\Phi(\alpha)$, we can obtain $B_{a}(\psi,\chi)\in\mathbf{w}$.
\end{proof}

\section{Completeness of PC and PC$\pm$}\label{Appendix:CompPCandPC'}

\begin{theorem}
(Soundness) $PC$ is sound w.r.t. $\vDash_{PC}$.
\end{theorem}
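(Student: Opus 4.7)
The plan is to verify, for each axiom of the PC calculus, that it is valid with respect to $\vDash_{PC}$, and that the inference rules (MP and Nec-K) preserve validity. Since the CN axioms and rules are already known to be sound for conditional neighbourhood models by Theorem \ref{thm:CompCN}, and their semantic clauses do not mention the announcement operator, this part transfers immediately. The real work lies in the reduction axioms for $[\phi]$. A preliminary step is to confirm that whenever $\mathfrak{M}$ is a conditional neighbourhood model and $\phi$ a formula, $\mathfrak{M}^\phi$ is again a conditional neighbourhood model: one must check that ${}^{\phi}N$ still satisfies (c), (ec), (d) and (sc), and verify that the knowledge cell of $w$ induced by ${}^{\phi}N$ inside $W^\phi$ coincides with $[w]_a \cap W^\phi$.

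For the boolean reduction axioms I would argue in the standard public-announcement style. For instance, $[\phi]p \leftrightarrow (\phi \to p)$ follows because $V^\phi(p) = V(p)\cap W^\phi$ and $w$ is assumed to lie in $W^\phi$ whenever the outer implication is non-vacuous; the reductions for $\neg$ and $\wedge$ are obtained by straightforward unfolding of the truth clause $\mathfrak{M},w\vDash_{PC}[\phi]\psi$ iff $\mathfrak{M},w\vDash_{PC}\phi$ implies $\mathfrak{M}^\phi,w\vDash_{PC}\psi$.

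The central case is the reduction axiom $[\phi]B_a(\psi,\chi) \leftrightarrow (\phi \to B_a(\phi\wedge[\phi]\psi, [\phi]\chi))$. Fix $\mathfrak{M}$ and $w$; if $\mathfrak{M},w \not\vDash \phi$ both sides are vacuously true, so assume $\mathfrak{M},w\vDash\phi$. The key semantic observation is that for every formula $\theta$, $\llbracket\theta\rrbracket_{\mathfrak{M}^\phi} = \llbracket\phi\wedge[\phi]\theta\rrbracket_{\mathfrak{M}}$ as subsets of $W$. Applying this to $\psi$ and $\chi$, and using ${}^{\phi}N_a^w(X) = N_a^w(X)$ for $X \subseteq W^\phi$, the left-hand side reduces to the existence of $Y \in N_a^w(\llbracket\phi\wedge[\phi]\psi\rrbracket_{\mathfrak{M}})$ with $Y \subseteq \llbracket\phi\wedge[\phi]\chi\rrbracket_{\mathfrak{M}}$. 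This matches the right-hand side because condition (c) forces any such $Y$ to be contained in $\llbracket\phi\wedge[\phi]\psi\rrbracket_{\mathfrak{M}}$, hence in $\llbracket\phi\rrbracket_{\mathfrak{M}}$, so $Y \subseteq \llbracket[\phi]\chi\rrbracket_{\mathfrak{M}}$ and $Y \subseteq \llbracket\phi\wedge[\phi]\chi\rrbracket_{\mathfrak{M}}$ become equivalent.

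The hard part will be the preliminary lemma that $\mathfrak{M}^\phi$ really is a conditional neighbourhood model. Conditions (c), (ec), (d) and (sc) transfer from $\mathfrak{M}$ to $\mathfrak{M}^\phi$ once one knows that the knowledge cell of $w$ in the restricted model agrees with $[w]_a \cap W^\phi$. The forward inclusion is routine, but the reverse inclusion requires using (ec) in $\mathfrak{M}$ to reduce an arbitrary $X \subseteq W$ to its intersection with $[w]_a$, which can then be compared on $W^\phi$ before appealing to the assumption that ${}^{\phi}N_a^w(X) = {}^{\phi}N_a^v(X)$ for every $X \subseteq W^\phi$. Once this cell identification is in hand the verification of the four conditions is mechanical, and the soundness argument sketched above goes through without further obstacle.
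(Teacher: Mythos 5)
Your proposal is correct and follows essentially the same route as the paper's proof: both hinge on the identity $\llbracket\theta\rrbracket_{\mathfrak{M}^\phi}=\llbracket\phi\wedge[\phi]\theta\rrbracket_{\mathfrak{M}}$ together with the identification of the updated knowledge cell $[w]_a^\phi$ with $[w]_a\cap W^\phi$, and then unfold the truth clause for $B_a$ on both sides. If anything you are more careful than the paper, which actually verifies the variant axiom with second argument $\phi\wedge[\phi]\chi$ rather than $[\phi]\chi$ and never explicitly proves that $\mathfrak{M}^\phi$ is again a conditional neighbourhood model; your use of condition (c) to equate the two variants and your preliminary closure lemma fill exactly those gaps.
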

\begin{proof}
To illustrate that $[\phi]B_{a}(\psi,\chi)\leftrightarrow(\phi\to B_{a}(\phi\wedge[\phi]\psi,\phi\wedge[\phi]\chi))$
is sound. Consider any conditional neighborhood model $\mathfrak{M}=(W,N,V)$
and any $w\in W$. 

$\mathfrak{M},w\vDash[\phi]B_{a}(\psi,\chi)$, iff $\mathfrak{M},w\vDash\phi$
only if $\mathfrak{M}^{\phi},w\vDash B_{a}(\psi,\chi)$, iff $\mathfrak{M},w\vDash\phi$
only if $\{v\in\left\llbracket \psi\right\rrbracket _{\text{\ensuremath{\mathfrak{M}}}^{\phi}}\cap[w]_{a}^{\phi}\mid\mathfrak{M}^{\phi},w\vDash\chi\}\in M_{a}^{w}(\left\llbracket \psi\right\rrbracket _{\text{\ensuremath{\mathfrak{M}}}^{\phi}})$.

$\mathfrak{M},w\vDash\phi\to B_{a}(\phi\wedge[\phi]\psi,\phi\wedge[\phi]\chi)$
iff $\mathfrak{M},w\vDash\phi$ only if $\mathfrak{M},w\vDash B_{a}(\phi\wedge[\phi]\psi,\phi\wedge[\phi]\chi)$,
iff $\mathfrak{M},w\vDash\phi$ only if $\{v\in\left\llbracket \phi\wedge[\phi]\psi\right\rrbracket _{\text{\ensuremath{\mathfrak{M}}}}\cap[w]_{a}\mid\mathfrak{M},v\vDash\phi\wedge[\phi]\chi\}\in N_{a}^{w}(\left\llbracket \phi\wedge[\phi]\psi\right\rrbracket _{\text{\ensuremath{\mathfrak{M}}}})$.

Let $v\in W$.
\begin{enumerate}
 \item $\mathfrak{M},v\vDash\phi\wedge[\phi]\chi$ iff
$\mathfrak{M},v\vDash\phi$ and $\mathfrak{M},v\vDash\phi$ implies
$\mathfrak{M}^{\phi},v\vDash\chi$ iff $\mathfrak{M}^{\phi},v\vDash\chi$.
 \item $\mathfrak{M},v\vDash\phi\wedge[\phi]\psi$ iff
$\mathfrak{M},v\vDash\phi$ and $\mathfrak{M},v\vDash\psi$ implies
$\mathfrak{M}^{\phi},v\vDash\psi$ iff $\mathfrak{M}^{\phi},v\vDash\psi$.
 \item $\mathfrak{M},w\vDash\phi$ implies $v\in\left\llbracket \phi\wedge[\phi]\psi\right\rrbracket _{\text{\ensuremath{\mathfrak{M}}}}\cap[w]_{a}$
iff $\mathfrak{M},w\vDash\phi$ implies $\mathfrak{M},v\vDash\phi\wedge[\phi]\psi$
and $w\sim_{a}v$ iff $\mathfrak{M}^{\phi},v\vDash\psi$ and $w\sim_{a}^{\phi}v$
iff $v\in\left\llbracket \psi\right\rrbracket _{\text{\ensuremath{\mathfrak{M}}}^{\phi}}\cap[w]_{a}^{\phi}$.
\end{enumerate}

It follows that $\mathfrak{M},w\vDash\phi\to B_{a}(\phi\wedge[\phi]\psi,\phi\wedge[\phi]\chi)$
iff $\mathfrak{M},w\vDash\phi$ implies $\{v\in\left\llbracket \psi\right\rrbracket _{\text{\ensuremath{\mathfrak{M}}}^{\phi}}\cap[w]_{a}^{\phi}
\mid\mathfrak{M}^{\phi},v\vDash\chi\}\in M_{a}^{w}(\left\llbracket \psi\right\rrbracket _{\text{\ensuremath{\mathfrak{M}}}^{\phi}})$, iff $\mathfrak{M},w\vDash[\phi]B_{a}(\psi,\chi)$, and this completes our proof.
\end{proof}

\begin{theorem}
The PC Calculus is complete. 
\end{theorem}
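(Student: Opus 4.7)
The plan is to prove completeness of PC by reducing it to the completeness of CN (Theorem \ref{thm:CompCN}) via the standard translation argument for dynamic epistemic logics. The reduction axioms of PC allow one to push every announcement operator $[\phi]$ inward through Booleans and conditional belief operators until it disappears on atomic propositions. So I would define a translation $t : \mathcal{L}_{PC} \to \mathcal{L}_{CN}$ that eliminates all occurrences of $[\phi]$, and then transfer completeness through the provable equivalence $\phi \leftrightarrow t(\phi)$.

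More concretely, I would define $t$ by recursion so that $t$ commutes with $\top$, propositional variables, negation, conjunction, and $B_a$ on formulas without announcements, while on formulas of the shape $[\phi]\psi$ it applies the appropriate reduction axiom one layer at a time, recursing on the result. The key clauses are $t([\phi]p) = t(\phi \to p)$, $t([\phi]\neg\psi) = t(\phi \to \neg [\phi]\psi)$, $t([\phi](\psi \wedge \chi)) = t([\phi]\psi \wedge [\phi]\chi)$, $t([\phi]B_a(\psi,\chi)) = t(\phi \to B_a(\phi \wedge [\phi]\psi,[\phi]\chi))$, and $t([\phi][\chi]\psi) = t([\phi]t([\chi]\psi))$, with iterated announcements handled by first reducing the innermost one. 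By construction and the PC reduction axioms (together with $K$-necessitation and propositional reasoning), one verifies by induction that $\vdash_{PC} \phi \leftrightarrow t(\phi)$ for every $\phi \in \mathcal{L}_{PC}$.

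The main obstacle is showing that $t$ is well-defined, i.e., that the recursion terminates, because the key reduction axiom replaces $[\phi]B_a(\psi,\chi)$ with a formula that still contains $[\phi]\psi$ and $[\phi]\chi$ as subformulas, so a naive induction on formula length fails. I would therefore introduce a complexity measure $c$ in the style of standard PAL, roughly $c(p) = 1$, $c(\neg\phi) = c(\phi) + 1$, $c(\phi \wedge \psi) = \max(c(\phi),c(\psi)) + 1$, $c(B_a(\phi,\psi)) = \max(c(\phi),c(\psi)) + 1$, and $c([\phi]\psi) = (c(\phi) + 4) \cdot c(\psi)$, and then verify that in every reduction axiom the right-hand side has strictly smaller $c$-value than the left-hand side. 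The multiplicative clause for $[\phi]\psi$ is designed so that pushing $[\phi]$ into a $B_a$ strictly decreases the multiplier on $\psi$ and $\chi$ and so decreases overall complexity, ensuring termination.

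Once termination of $t$ and the provable equivalence $\vdash_{PC} \phi \leftrightarrow t(\phi)$ are established, completeness follows easily. Suppose $\vDash_{PC} \phi$. By soundness of PC and the semantic validity of each reduction axiom (the $B_a$-case was already verified in the preceding soundness proof; the Boolean cases are routine), we have $\vDash_{PC} t(\phi)$. Since $t(\phi) \in \mathcal{L}_{CN}$ and the $\vDash_{PC}$ and $\vDash_{CN}$ semantics agree on announcement-free formulas, $\vDash_{CN} t(\phi)$. By Theorem \ref{thm:CompCN} we obtain $\vdash_{CN} t(\phi)$, hence $\vdash_{PC} t(\phi)$, and by the provable equivalence $\vdash_{PC} \phi$, as required.
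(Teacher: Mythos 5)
Your proposal is correct and takes essentially the same route as the paper, which simply states that the reduction axioms let one compile out the announcement operators and appeal to the completeness of CN; you have filled in the standard details (the translation $t$, the complexity measure guaranteeing termination, and the transfer via provable equivalence) that the paper leaves implicit.
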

\begin{proof}
This follows directly from the completeness of CN, plus the fact that the axioms for 
public announcement update are reduction axioms: we can compile out the update operators
to reduce PC to CN. 
\end{proof}

\begin{theorem}
(Soundness) PC$\pm$ is sound w.r.t. $\vDash_{PC\pm}$.
\end{theorem}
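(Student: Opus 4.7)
The plan is to argue soundness by structural induction, reducing to two subtasks: (i) every axiom of PC$\pm$ is valid in the class of conditional neighborhood models under $\vDash_{PC\pm}$, and (ii) the inference rules (MP and Nec-K) preserve validity. Since the CN axioms and rules are taken over unchanged and their validity (from Theorem \ref{thm:CompCN}) does not depend on the particular shape of subformulas, they remain sound when the language is enriched with $[\pm\phi]$. So the real work concentrates on the four reduction axioms, together with a preliminary check that $\mathfrak{M}^{\pm\phi}$ is again a conditional neighborhood model whenever $\mathfrak{M}$ is; without the latter, the recursive clause $\mathfrak{M},w \vDash [\pm\phi]\psi$ iff $\mathfrak{M}^{\pm\phi},w \vDash \psi$ would not be well-typed.

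For the preservation claim, observe that $[w]_a^{\pm\phi}$ is $[w]_a \cap \llbracket\phi\rrbracket_{\mathfrak{M}}$ when $\mathfrak{M},w \vDash \phi$ and $[w]_a \cap \llbracket\neg\phi\rrbracket_{\mathfrak{M}}$ otherwise, and that $^{\pm\phi}N_a^w(X) = N_a^w(X \cap [w]_a^{\pm\phi})$. Because $[w]_a^{\pm\phi} \subseteq [w]_a$, each of (c), (ec), (d) and (sc) transfers from $\mathfrak{M}$ to $\mathfrak{M}^{\pm\phi}$ in one line: for (c) and (d) one simplifies $X \cap [w]_a^{\pm\phi} \cap [w]_a$ to $X \cap [w]_a^{\pm\phi}$; for (ec) and (sc) one pushes the intersection with $[w]_a^{\pm\phi}$ through and reuses the corresponding property of $N$. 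Condition (a) from the equivalent definition in Appendix \ref{appendix:alternativeDef} also survives: if $v \in [w]_a^{\pm\phi}$ then $v \in [w]_a$, so $N_a^w = N_a^v$ in $\mathfrak{M}$ and $[w]_a^{\pm\phi} = [v]_a^{\pm\phi}$, giving $^{\pm\phi}N_a^w = {}^{\pm\phi}N_a^v$.

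The three Boolean reduction axioms follow directly from $W^{\pm\phi} = W$, $V^{\pm\phi} = V$ and the unconditional success of the $[\pm\phi]$-semantics (in contrast to $[\phi]$ there is no truth-precondition for update). The interesting step is the belief axiom. The plan is to unfold $\mathfrak{M},w \vDash [\pm\phi] B_a(\psi,\chi)$ to $\mathfrak{M}^{\pm\phi},w \vDash B_a(\psi,\chi)$, then translate the three semantic ingredients back to $\mathfrak{M}$: the formula extension $\llbracket \psi \rrbracket_{\mathfrak{M}^{\pm\phi}} = \llbracket [\pm\phi]\psi \rrbracket_{\mathfrak{M}}$ (by induction hypothesis and the semantic clause for $[\pm\phi]$); the updated knowledge cell $[w]_a^{\pm\phi}$; and the updated neighborhood $^{\pm\phi}N_a^w(X) = N_a^w(X \cap [w]_a^{\pm\phi})$. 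Splitting on whether $\mathfrak{M},w \vDash \phi$ or $\mathfrak{M},w \vDash \neg\phi$, one of the two implications on the right-hand side becomes vacuously true, and the active one uses (ec) to collapse $N_a^w(\llbracket [\pm\phi]\psi \rrbracket_{\mathfrak{M}} \cap [w]_a \cap \llbracket \phi \rrbracket_{\mathfrak{M}})$ to $N_a^w(\llbracket \phi \wedge [\pm\phi]\psi \rrbracket_{\mathfrak{M}})$, matching the shape required by $B_a(\phi \wedge [\pm\phi]\psi,[\pm\phi]\chi)$.

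The main obstacle is bookkeeping the nested intersections and applying (ec) at the right place; once the convention that either $\phi$ or $\neg\phi$ holds at $w$ is fixed, the calculation is mechanical and parallels the argument for PC in the previous theorem. The one structurally new ingredient is precisely that $[\pm\phi]$ never fails, which is why the reduction axiom for $B_a$ takes the symmetric shape of a conjunction of two conditionals (one for the $\phi$-case, one for the $\neg\phi$-case) rather than the single implication used in PC.
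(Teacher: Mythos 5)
Your proposal is correct and follows essentially the same route as the paper: a case split on whether $\phi$ or $\neg\phi$ holds at $w$, unfolding both sides of the belief reduction axiom, and identifying $\llbracket\psi\rrbracket_{\mathfrak{M}^{\pm\phi}}\cap[w]_a^{\pm\phi}$ with $\llbracket\phi\wedge[\pm\phi]\psi\rrbracket_{\mathfrak{M}}\cap[w]_a$ so that the witness sets and (via (ec)) the neighbourhoods coincide. You are in fact somewhat more careful than the paper, which silently assumes that $\mathfrak{M}^{\pm\phi}$ is again a conditional neighbourhood model and does not spell out the Boolean reduction axioms; your explicit verification of (c), (ec), (d), (sc) and (a) for $^{\pm\phi}N$ fills a genuine (if routine) omission.
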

\begin{proof}
To illustrate that $[\pm\phi]B_{a}(\psi,\chi)\leftrightarrow(\phi\to B_{a}(\phi\wedge[\pm\phi]\psi,[\pm\phi]\chi))\wedge(\neg\phi\to B_{a}(\neg\phi\wedge[\pm\phi]\psi,[\pm\phi]\chi))$
is sound. Consider any conditional neighborhood model $\mathfrak{M}=(W,N,V)$
and any $w\in W$. We consider two cases: $\mathfrak{M},w\vDash\phi$ or $\mathfrak{M},w\vDash\neg\phi$.

First suppose $\mathfrak{M},w\vDash\phi$. Then $\mathfrak{M},w\vDash[\pm\phi]B_{a}(\psi,\chi)$, iff $\mathfrak{M}^{\pm\phi},w\vDash B_{a}(\psi,\chi)$, iff $\{v\in\left\llbracket \psi\right\rrbracket _{\text{\ensuremath{\mathfrak{M}}}^{\pm\phi}}\cap[w]_{a}^{\pm\phi}\mid\mathfrak{M}^{\pm\phi},w\vDash\chi\}\in M_{a}^{w}(\left\llbracket \psi\right\rrbracket _{\text{\ensuremath{\mathfrak{M}}}^{\pm\phi}})$. 

Furthermore, $\mathfrak{M},w\vDash(\phi\to B_{a}(\phi\wedge[\pm\phi]\psi,[\pm\phi]\chi))\wedge(\neg\phi\to B_{a}(\neg\phi\wedge[\pm\phi]\psi,[\pm\phi]\chi))$ iff (because $\mathfrak{M},w\vDash\phi$,) $\mathfrak{M},w\vDash B_{a}(\phi\wedge[\pm\phi]\psi,[\pm\phi]\chi)$ iff 
\[
\{v\in\left\llbracket \phi\wedge[\pm\phi]\psi\right\rrbracket _{\text{\ensuremath{\mathfrak{M}}}}\cap[w]_{a}\mid\mathfrak{M},v\vDash\phi\wedge[\pm\phi]\chi\}\in N_{a}^{w}(\left\llbracket \phi\wedge[\pm\phi]\psi\right\rrbracket _{\text{\ensuremath{\mathfrak{M}}}}).
\]

Let $v\in W$.
\begin{enumerate}
 \item $\mathfrak{M},v\vDash\phi\wedge[\pm\phi]\chi$ iff
$\mathfrak{M},v\vDash\phi$ and $\mathfrak{M},v\vDash\phi$ implies
$\mathfrak{M}^{\pm\phi},v\vDash\chi$ iff $\mathfrak{M}^{\pm\phi},v\vDash\chi$.
 \item $\mathfrak{M},v\vDash\phi\wedge[\pm\phi]\psi$ iff
$\mathfrak{M},v\vDash\phi$ and $\mathfrak{M},v\vDash\psi$ implies
$\mathfrak{M}^{\pm\phi},v\vDash\psi$ iff $\mathfrak{M}^{\pm\phi},v\vDash\psi$.
 \item $\mathfrak{M},w\vDash\phi$ implies $v\in\left\llbracket \phi\wedge[\pm\phi]\psi\right\rrbracket _{\text{\ensuremath{\mathfrak{M}}}}\cap[w]_{a}$
iff $\mathfrak{M},w\vDash\phi$ implies $\mathfrak{M},v\vDash\phi\wedge[\pm\phi]\psi$
and $w\sim_{a}v$ iff $\mathfrak{M}^{\pm\phi},v\vDash\psi$ and $w\sim_{a}^{\pm\phi}v$
iff $v\in\left\llbracket \psi\right\rrbracket _{\text{\ensuremath{\mathfrak{M}}}^{\pm\phi}}\cap[w]_{a}^{\pm\phi}$.
\end{enumerate}

It follows that $\mathfrak{M},w\vDash(\phi\to B_{a}(\phi\wedge[\pm\phi]\psi,[\pm\phi]\chi))\wedge(\neg\phi\to B_{a}(\neg\phi\wedge[\pm\phi]\psi,[\pm\phi]\chi))$
iff $\{v\in\left\llbracket \psi\right\rrbracket _{\text{\ensuremath{\mathfrak{M}}}^{\pm\phi}}\cap[w]_{a}^{\pm\phi}
\mid\mathfrak{M}^{\pm\phi},v\vDash\chi\}\in M_{a}^{w}(\left\llbracket \psi\right\rrbracket _{\text{\ensuremath{\mathfrak{M}}}^{\pm\phi}})$, iff $\mathfrak{M},w\vDash[\pm\phi]B_{a}(\psi,\chi)$.

The proof of the case that $\mathfrak{M},w\vDash\neg\phi$ is similar to that of the first case, and we omit the proof.  
\end{proof}
\begin{theorem}
The PC$\pm$ calculus is complete. 
\end{theorem}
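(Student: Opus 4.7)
The plan is to mirror the proof for PC: show that the reduction axioms of PC$\pm$ suffice to eliminate every $[\pm\phi]$ operator, translating each $\mathcal{L}_{PC\pm}$-formula into a provably equivalent $\mathcal{L}_{CN}$-formula, and then invoke completeness of CN (Theorem~\ref{thm:CompCN}).

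First I would define a translation $t : \mathcal{L}_{PC\pm} \to \mathcal{L}_{CN}$ by recursion, using the reduction axioms read left-to-right. The base and Boolean cases are fixed by $t(p) = p$, $t(\neg\psi) = \neg t(\psi)$, $t(\psi \wedge \chi) = t(\psi) \wedge t(\chi)$, $t(B_a(\psi,\chi)) = B_a(t(\psi), t(\chi))$. The key cases consume the reduction axioms, innermost first: for a formula of the shape $[\pm\phi]\theta$ where $\theta$ is itself announcement-free one applies the appropriate axiom, e.g. $t([\pm\phi]p) = t(p)$, $t([\pm\phi]\neg\psi) = \neg t([\pm\phi]\psi)$, $t([\pm\phi](\psi\wedge\chi)) = t([\pm\phi]\psi)\wedge t([\pm\phi]\chi)$, and, crucially,
\[
t([\pm\phi]B_a(\psi,\chi)) = (t(\phi) \to B_a(t(\phi\wedge[\pm\phi]\psi), t([\pm\phi]\chi))) \wedge (\neg t(\phi) \to B_a(t(\neg\phi\wedge[\pm\phi]\psi), t([\pm\phi]\chi))).
\]
For nested announcements $[\pm\phi_1][\pm\phi_2]\theta$ one first pushes the innermost operator through $\theta$ and then deals with $[\pm\phi_1]$.

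The main obstacle is showing that this recursion terminates, because the reduction axiom for $[\pm\phi]B_a(\psi,\chi)$ reproduces $[\pm\phi]$ in front of $\psi$ and $\chi$. The standard fix is to define a well-founded complexity measure $c$ on $\mathcal{L}_{PC\pm}$ such that on each of the four reduction axioms the right-hand side has strictly smaller $c$-complexity than the left-hand side; a lexicographic measure works, taking for the primary component the number of occurrences of announcement operators weighted by the size of the announced formula (roughly $c([\pm\phi]\theta) > c(\theta) + c(\phi)$ and similarly for each subformula of $\theta$ prefixed by $[\pm\phi]$) and for the secondary component the length of the formula. Under such a measure, pushing $[\pm\phi]$ through any connective or through $B_a$ strictly decreases $c$, so the translation terminates and $t(\alpha) \in \mathcal{L}_{CN}$ for every $\alpha \in \mathcal{L}_{PC\pm}$.

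Once termination is in hand, the remaining steps are routine. By soundness of the reduction axioms (the preceding theorem), one proves by induction on $c$ that $\vDash_{PC\pm} \alpha \leftrightarrow t(\alpha)$ and moreover that $\vdash_{PC\pm} \alpha \leftrightarrow t(\alpha)$ (each rewrite step is an instance of a PC$\pm$-axiom and is applied inside a provable context using replacement of equivalents, which in turn follows from the CN axioms). Now suppose $\vDash_{PC\pm} \alpha$. Then $\vDash_{PC\pm} t(\alpha)$, and since $t(\alpha)$ contains no announcement operators this reduces to $\vDash_{CN} t(\alpha)$ on the class of conditional neighbourhood models. By Theorem~\ref{thm:CompCN}, $\vdash_{CN} t(\alpha)$, hence $\vdash_{PC\pm} t(\alpha)$, and combining with the provable equivalence $\vdash_{PC\pm} \alpha \leftrightarrow t(\alpha)$ yields $\vdash_{PC\pm} \alpha$, which is the desired completeness.
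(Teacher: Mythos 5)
Your proposal is correct and is essentially the paper's own argument, spelled out in more detail: the paper simply states that the reduction axioms let one compile out the $[\pm\phi]$ operators and then invokes completeness of CN, which is exactly the translation-plus-termination-plus-replacement-of-equivalents argument you give. Your explicit treatment of the termination issue (the reduction axiom for $B_a$ reintroduces $[\pm\phi]$ on subformulas, so a well-founded complexity measure is needed) is a detail the paper leaves implicit but is the right thing to check.
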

\begin{proof}
Again, this follows directly from the completeness of CN, plus the fact that the axioms for 
public announcement update are reduction axioms: we can compile out the update operators
to reduce PC$\pm$ to CN. 
\end{proof}

\end{document}